\newcommand{\bmin}{\mathop{\mathrm{min}}}
\newcommand{\bargmin}{\mathop{\mathrm{arg\ min}}}
\numberwithin{equation}{section}
\numberwithin{theorem}{section}
\numberwithin{corollary}{section}
\numberwithin{definition}{section}
\begin{document}

\title{\LARGE Large-dimensional Factor Analysis without Moment Constraints}

	\author{Yong He\thanks{ Institute for Financial Studies, Shandong University, Jinan, China; Email:{\tt heyong@sdu.edu.cn}.},~~Xinbing Kong\thanks{Nanjing Audit University, Nanjing, 211815, China; Email:{\tt xinbingkong@126.com}.},~~Long Yu\thanks{School of Management, Fudan University, Shanghai, China; Email:{\tt loyu@umich.edu}.},~~Xinsheng Zhang\thanks{School of Management, Fudan University, Shanghai, China; Email:{\tt xszhang@fudan.edu.cn}.}}	
	\date{}	
	\maketitle
Large-dimensional factor model has drawn much attention in the big-data era, in order to reduce the dimensionality and extract underlying features using a few latent common factors. Conventional methods for estimating the factor model typically requires finite fourth moment of the data, which ignores the effect of heavy-tailedness and thus may result in unrobust or even inconsistent estimation of the factor space and common components. In this paper, we propose to recover the factor space by performing principal component analysis to the spatial Kendall's tau matrix instead of the sample covariance matrix. In a second step, we estimate the factor scores by the ordinary least square (OLS) regression. Theoretically, we show that under the elliptical distribution framework the factor loadings and scores as well as the common components can be estimated consistently without any moment constraint. The convergence rates of the estimated factor loadings, scores and common components are provided. The finite sample performance of the proposed procedure is assessed through thorough simulations. An analysis of a financial data set of asset returns shows the superiority of the proposed method over the classical PCA method.

\vspace{2em}

\textbf{Keyword:} Elliptical factor model;  Ordinary least square regression; Multivariate Kendall's tau matrix.

\section{Introduction}
Factor model is a classical statistical model that serves as an important dimension reduction tool by characterizing the dependency structure of variables via a few latent factors. In the ``big-data era" where more and more variables are recorded and stored, large-dimensional approximate factor model is drawing growing attention as it
provides an effective way of summarizing information from large data sets. The large-dimensional approximate factor models are widely used in genomics, neuroscience, computer science and financial economics.  Theoretical analysis of large-dimensional approximate factor models has been studied by many researchers. Existing factor analysis procedures mainly fall into two categories: the principle component analysis (PCA) approach and the maximum likelihood estimation (MLE) method. The PCA-based method is easy to implement and  provides consistent estimators for the factors and factor loadings when both the cross-section $p$ and time dimension $n$ are large. Representative works include, but not limited to, \cite{Bai2002Determining,stock2002forecast,Stock2002Macroeconomic,Bai2003Inferential,onatski2009testing,Ahn2013Eigenvalue,
fan2013large,Trapani2018A}. It turns out that the PCA approach is equivalent to the least square optimization. The MLE-based method is more efficient than the PCA-based approach but is also computationally more suffering. Representative works, to name a few, are \cite{Bai2012Statistical,Bai2014Theory,Bai2016Maximum}.

However, the aforementioned works all assume that the fourth moments (or even higher moments) of factors and idiosyncratic errors are bounded such that the least-squares regression, or maximum likelihood estimation can be applied. This assumption is really an idealization of the complex random real world. Heavy-tailed data are often encountered in scientific fields such as financial engineering and biomedical imaging. In finance,  \cite{Fama1963Mandelbrot} discussed the power law behavior of asset returns. \cite{Cont2001Empirical} provided extensive empirical evidence of heavy-tailedness in financial returns. \cite{jing2012modeling} and  \cite{kong2015testing} even suggested to model the log price dynamics of an asset by pure jump processes without any moment conditions. Thus, it is imperative to develop estimation procedures that are robust to heavy-tailedness for large-dimensional factor models.

\begin{figure}[h]
  \centering
  \begin{minipage}[!t]{0.48\linewidth}
    \includegraphics[width=1\textwidth]{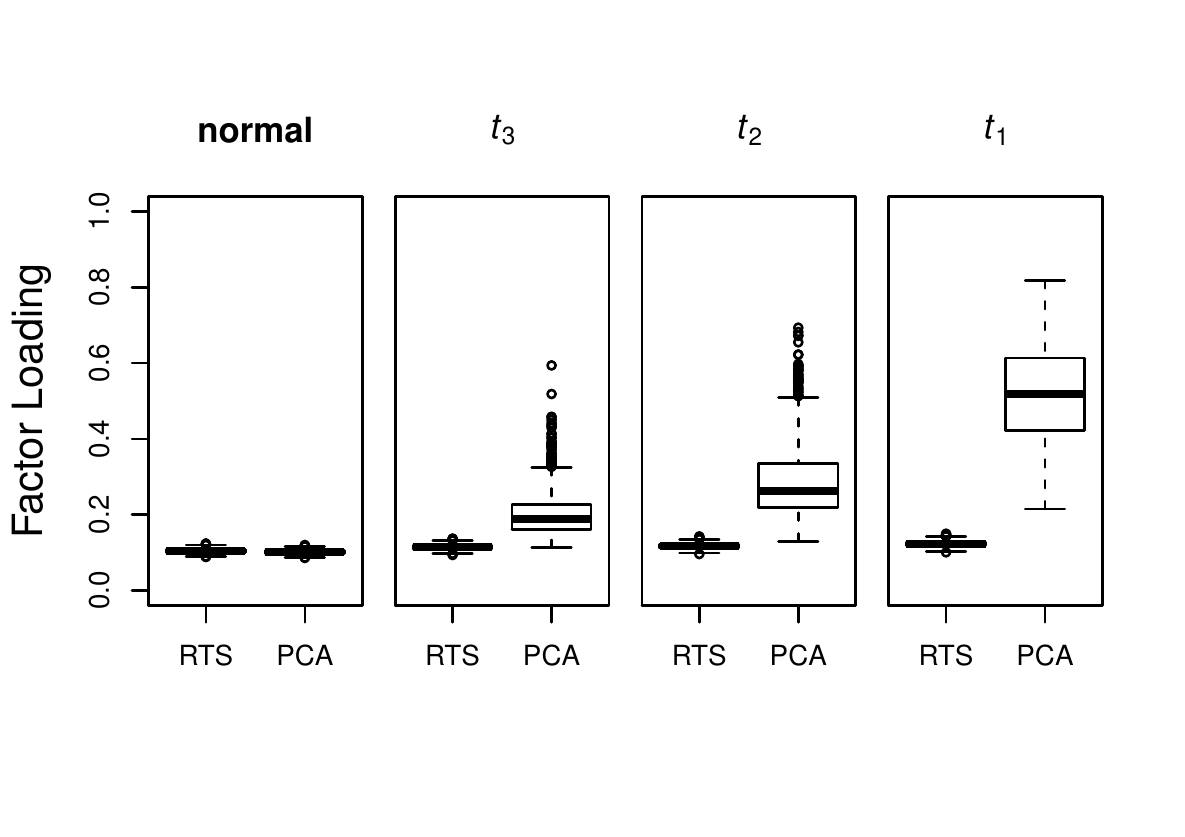}\\
  \end{minipage}
  \begin{minipage}[!t]{0.48\linewidth}
    \includegraphics[width=1\textwidth]{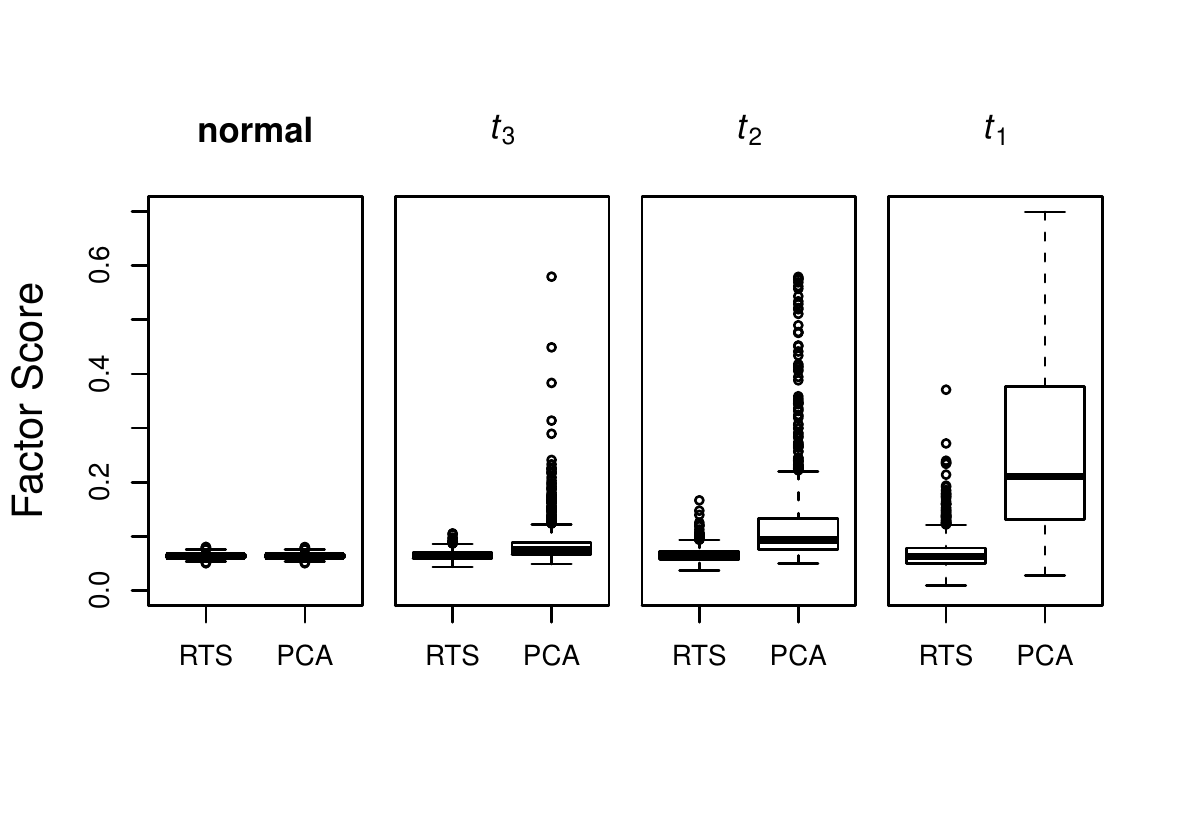}\\
  \end{minipage}
 \caption{Boxplots of the estimation errors of the estimated factor loadings and scores by RTS and PCA methods under different data generating distributions---normal, $t_3$, $t_2$ and $t_1$. $p=250,n=100$.}\label{fig:1}
 \end{figure}

As an illustration, we check the sensitivity of the PCA (or Least Square Optimization) method to the heavy-tailedness of the factor and idiosyncratic errors with a synthetic data set. We generate the factors and idiosyncratic errors from joint normal, $t_3$, $t_2$ and $t_1$ distributions that will be described in detail in Section \ref{ss}. Figure \ref{fig:1} depicts the boxplots of the factor loading and factor score estimation errors based on 1000 replications. We observe that the PCA results in bigger biases and higher dispersions as the distribution tails become heavier. This is also consistent with the fourth moment condition imposed in existing papers.

In this article, we propose a robust two step (RTS) procedure to estimate the factor loadings, scores and common components without any moment constraint under the framework of elliptical distributions (FED). The FED assumes that the factors and the idiosyncratic errors jointly follow an elliptical distribution, which covers a large class of heavy-tailed distributions such as $t$-distribution. The FED is drawing growing attention as an important tool to simultaneously simplify the structure and capture the heavy-tailedness of the data. For example, \cite{fan2018} considered large-scale covariance estimation through the FED; and  \cite{yu2018robust} proposed robust estimator of the factor number of a large-dimensional factor model under the FED condition. In the first step, we recover the factor space, spanned by the columns of the factor loadings, by performing PCA to the estimated spatial Kendall's tau matrix rather than the sample covariance matrix. The spatial Kendall's tau matrix shares the same eigenspace with the scatter matrix of the elliptically distributed data vectors and the scatter matrix serves as a measure of the cross-sectional dependence of the data. Since the scatter matrix has nothing to do with the moment of the data, the resulting estimated factor space from RTS is consistent to the true factor space without any moment requirement. In a second step, we estimate the factor score by a cross-sectional least square regression based on the estimated factor loadings in the first step. Due to the polarization of the elliptical distribution, the estimated factor scores are consistent up to some orthogonal transformation without any moment restriction. To the best of our knowledge, this is the first work that can estimate the factor loadings and scores (up to some orthogonal transformations) and common components for heavy-tailed data without any moment requirement on the factors and idiosyncratic errors under the FED condition. Now, let us come back to the example mentioned earlier in Figure \ref{fig:1}, in which we also presented the results using the RTS method. Figure \ref{fig:1} demonstrates that the estimated factor loadings and scores are not much sensitive to the heavy-tailedness of the factors and idiosyncratic errors. All the RTS estimates are substantially more accurate than the PCA estimates for the three $t$-distribution settings.

The critical tool in the current paper, spatial Kendall's tau matrix, is first introduced in \cite{Choi1998A}, also named as multivariate Kendall's tau matrix in the literature. Its applications on covariance matrix estimation and principal component analysis in low dimensions can be found in but not limited to \cite{Marden1999Some,Visuri2000Sign} and \cite{croux2002sign}. For high-dimensional settings, \cite{han2018eca} studied the eigen-analysis of spatial Kendall's tau matrix for elliptical  distributions.  Similar rank-based methods in high dimensions are also discussed in \cite{han2012semiparametric} and \cite{Han2014Scale}. The mentioned literatures provide a sound framework of constructing Bernstein's type concentration inequalities for high-dimensional matrix-form U-statistics, but  none focused on the factor models. As a result, the existing convergence rates for robust principal component analysis are not optimal if the observed vectors contain some low-rank factor structures. To the best of our knowledge, \cite{fan2018} is the the first to consider factor structures with spatial Kendall's tau matrix in high dimensions, which proposed the elliptical factor model  exactly the same as the model considered in the current paper. However, the motivations of the two works are quite different. \cite{fan2018} focused on robust covariance estimation while our work focused on robust estimation for the factor loadings and scores.

The contributions of the current paper lie in the following aspects. Firstly, it's the first to estimate the factor scores and loadings using spatial Kendall's tau matrix in high-dimensional settings. The proposed method is computationally efficient and easy to implement.  Secondly, our theoretical analysis shows that the proposed robust estimators are consistent without any moment constraints on the underlying distributions, which is the first in the literature and  makes the method applicable to analyzing heavy-tailed datasets such as financial returns and macroeconomic indicators. Thirdly, the convergence rates of the RTS estimators are the same as those in \cite{Bai2003Inferential}, which makes RTS a safe replacement of the conventional PCA approach. We overcome two major challenges in the technical proofs: 1) the summing terms in the sample spatial Kendall's tau matrix are dependent, which makes the typical Bernstein's inequities unapplicable; 2) the sample spatial Kendall's tau matrix is a  nonlinear function of the observed vectors, which in essence makes the theoretical analysis more challenging.

We introduce the notations adopted throughout the paper. For any vector $\bmu=(\mu_1,\ldots,\mu_p)^\top \in \RR^p$, let $\|\bmu\|_2=(\sum_{i=1}^p\mu_i^2)^{1/2}$, $\|\bmu\|_\infty=\max_i|\mu_i|$. For a real number $a$, denote  $[a]$ as the largest integer smaller than or equal to $a$. Let $I(\cdot)$ be the indicator function. Let ${\rm diag}(a_1,\ldots,a_p)$ be a $p\times p$ diagonal matrix, whose diagonal entries are $a_1\ldots,a_p$.   For a matrix $\Ab$, let $\mathrm{A}_{ij}$ (or $\mathrm{A}_{i,j}$) be the $ij$ entry of $\Ab$, $\Ab^\top$ the transpose of $\Ab$, ${\rm Tr}(\Ab)$ the trace of $\Ab$, $\text{rank}(\Ab)$ the rank of $\Ab$ and $\text{diag}(\Ab)$ a vector composed of the diagonal elements of $\Ab$. Denote $\lambda_j(\Ab)$ as the $j$-th largest eigenvalue of a nonnegative definitive matrix $\Ab$, and let $\|\Ab\|$ be the spectral norm of matrix $\Ab$ and $\|\Ab\|_F$ be the Frobenius norm of $\Ab$. For two series of random variables, $X_n$ and $Y_n$, $X_n\asymp Y_n$ means $X_n=O_p(Y_n)$ and $Y_n=O_p(X_n)$. For two random variables (vectors) $\bX$ and $\bY$, $\bX\stackrel{d}{=}\bY$ means the distributions of $\bX$ and $\bY$ are the same. Let $\one$ be a vector with all elements 1.  The constants $c,C_1,C_2$ in different lines can be nonidentical.

The rest of the paper proceeds as follows. In Section 2, we introduce the setup assumptions and multivariate Kendall's tau matrix. Estimators of the factor loadings, scores and common components are also provided. In Section 3, we establish the consistency including the convergence rate for the estimated factor loadings, scores and common components. Section 4 is devoted to a thorough numerical study. A real financial data set of asset returns is analyzed in Section 5. We discuss the possible future research directions and conclude the article in Section 6.  The  proofs of the main theorems  are collected in the Appendix and additional details are put in the supplement.

\section{Methodology}


\subsection{Elliptical distribution and spatial Kendall's tau matrix}\label{sec:2.1}

Consider the large-dimensional factor model for a large panel data set $\{y_{it}\}_{i\leq p,t\leq n}$,
\begin{equation}\label{EFM}
  y_{it}=\bl_i^\top \bbf_t+\epsilon_{it}, \hspace{0.5 em}  i\leq p, \ \   t\leq n, \hspace{0.5em} \text{or in vector form,} \hspace{0.5em} \by_t=\Lb \bbf_t+\bepsilon_t,
\end{equation}
where $\by_t=(y_{1t},\ldots,y_{pt})^\top$, $\bbf_t\in \RR^m$ are the unobserved factors, $\Lb=(\bl_1,\ldots,\bl_p)^\top$ is the factor loading matrix, and $\bepsilon_t=(\epsilon_{1t},\ldots,\epsilon_{pt})^\top$ represents the idiosyncratic errors. The term $C_{it}=\bl_i^\top\bbf_t$ is referred to as the common component of $y_{it}$. For the large-dimensional approximate factor model introduced in \cite{Chamberlain1983Arbitrage}, $\bepsilon_t$ is assumed to be cross-sectionally weakly dependent.

As mentioned in the introduction and precisely stated in Assumption A below, we assume that $(\bbf_t^\top,\bepsilon_t^\top)^\top$ is a series of temporally independent and identically distributed random vectors generated from an elliptical distribution. For a random vector $\bZ=(Z_1,\ldots,Z_p)^\top$ following an elliptical distribution, denoted by $\bZ\sim {ED}(\bmu,\bSigma,\zeta)$, we mean that
$$
\bZ\overset{d}{=}\bmu+\zeta\Ab\bU,
$$
where $\bmu\in \RR^p$, $\bU$ is a random vector uniformly distributed on the unit sphere $S^{q-1}$
in $\RR^q$, $\zeta\geq 0$ is a scalar random variable independent of $\bU$, $\Ab\in\RR^{p\times q}$ is
a deterministic matrix satisfying $\Ab\Ab^\top=\bSigma$ with $\bSigma$ called scatter matrix whose rank is $q$. Let the scatter matrices of $\bbf_t$ and $\bepsilon_t$ be $\bSigma_f$ and $\bSigma_{\epsilon}$, respectively. If $\bSigma_{\epsilon}$ is sparse, the model (\ref{EFM}) is indeed an approximate factor model including the strict factor model, in which $\bSigma_{\epsilon}$ is diagonal, as a special case. Another equivalent characterization of the elliptical distribution is by its characteristic function,
which has the form $\exp(i\bt^\top \bmu) \psi(\bt^\top\bSigma \bt)$, where $\psi(\cdot)$ is a properly defined characteristic function
and $i=\sqrt{-1}$.

The factor loadings and scores, $\Lb$ and $\bbf_t$, are not separately identifiable as they are unobservable. For an arbitrary $m\times m$ invertible matrix $\Hb$, one can always have $\Lb^*=\Lb\Hb$ and $\bbf^*_t=\Hb^{-1}\bbf_t$ such that $\Lb^*\bbf^*_t=\Lb\bbf_t$. For reason of identifiability, we impose the following constraints:
$$
\bSigma_{f}=\Ib_m \hspace{1em} \text{and} \hspace{1em} \big\|\text{diag}(\bSigma_\epsilon)\big\|_{\infty}=1.
$$
This identification condition is also used in \cite{Han2014Scale} and \cite{yu2018robust}, and it is not unique and one may refer to \cite{Bai2012Statistical} for more detailed discussion on identification issues.

It is worthy of pointing out that elliptical distributions have some nice properties as Gaussian distributions, e.g., the marginal distributions,
conditional distributions and distributions of linear combinations of elliptical vectors are also elliptical. Thus, for the factor model (\ref{EFM}) under the FED condition, the scatter matrix of $\by_t$, $\bSigma_{y}$, is composed of a low-rank part $\Lb\Lb^\top$ and a sparse part $\bSigma_{\epsilon}$, i.e, $\bSigma_{y}=\Lb\Lb^\top+\bSigma_{\epsilon}$.
For Gaussian distribution, $\bSigma_y$ is simply the population covariance matrix of $\by_t$. For non-Gaussian distributions, especially distributions with infinite variances, the scatter matrix is still a measure of the dispersion of a random vector. So, naturally the eigenspace of the scatter matrix $\bSigma_y$ sheds light into the recovery of the factor space, but this can not be reached by performing PCA to the sample covariance matrix because the covariance is meaningless for pairs of random variables with infinite variances. To tackle with this difficulty, we introduce the population spatial Kendall's tau matrix. Let $\bX\sim ED(\bmu,\bSigma,\zeta)$ and $\tilde{\bX}$ be an independent copy of $\bX$. The population spatial Kendall's tau matrix is defined as
$$
\Kb={\rm E}\left\{\frac{(\bX-\tilde{\bX})(\bX-\tilde{\bX})^\top}{\|\bX-\tilde{\bX}\|_2^2}\right\}.
$$
$\Kb$ can be estimated by a second-order U-statistic. Specifically, assume $\{\bX_1,\ldots,\bX_n\}$ is a series of $n$ independent data points following the distribution $\bX\sim ED(\bmu,\bSigma,\zeta)$. The sample version spatial Kendal's tau matrix is
$$
\hat{\Kb}=\frac{2}{n(n-1)}\sum_{t<t^\prime}\frac{(\bX_t-{\bX_{t^\prime}})(\bX_t-{\bX_{t^\prime}})^\top}{\|\bX_t-{\bX_{t^\prime}}\|_2^2}.
$$
The spatial Kendall' tau matrix was first introduced in \cite{Choi1998A} and has been used for covariance matrix estimation in \cite{Visuri2000Sign,fan2018} and principal component estimation in \cite{Marden1999Some,han2018eca}. A critical result is that the spatial Kendall's tau matrix $\Kb$ shares the same ordering of eigenvalues and the same eigenspace as those of the scatter matrix $\bSigma$. We cite this result directly without proof in the following Lemma \ref{lemma:lemma1}.

\begin{lemma}\label{lemma:lemma1}
Let $\bX$  be a continuous elliptically distributed random vector, i.e., $\bX\sim ED(\bmu,\bSigma,\zeta)$ with $\PP(\zeta=0)=0$ and  $\Kb$ be the population multivariate Kendall's tau statistic.  Further assume that $\text{rank}(\bSigma)=q$, we have
$$
\lambda_{j}(\Kb)=\EE\left(\frac{\lambda_{j}(\bSigma)g_j^2}{\lambda_{1}(\bSigma)g_1^2+\cdots+\lambda_{q}(\bSigma)g_q^2}\right),
$$
where $\bg=(g_1,\ldots,g_q)^\top\sim\cN(\zero,\Ib)$, and in addition
$\Kb$ and $\bSigma$ share the same eigenspace with the same descending order of the eigenvalues.
\end{lemma}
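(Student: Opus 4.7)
The plan is to combine the stochastic representation of elliptical vectors with a short symmetry argument. First, since $\bX$ and $\tilde{\bX}$ are iid, their difference is again spherically symmetric about the origin, and admits the representation $\bX-\tilde{\bX}\stackrel{d}{=}\zeta^{*}\Ab\bU$ with $\bU$ uniform on $S^{q-1}$, $\zeta^{*}\geq 0$ independent of $\bU$, and $\Ab\Ab^\top=\bSigma$. The assumption $\PP(\zeta=0)=0$ passes to $\PP(\zeta^{*}=0)=0$, so $\|\bX-\tilde{\bX}\|_{2}^{2}$ is a.s.\ positive; since $\zeta^{*}$ is a scalar, it cancels in the ratio, giving
$$
\Kb=\EE\left[\frac{\Ab\bU\bU^\top\Ab^\top}{\bU^\top\Ab^\top\Ab\bU}\right],
$$
a quantity whose law depends on $\zeta$ only through $\PP(\zeta=0)$. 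This already explains why no moment assumption will be required.

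Next I would fix the factorization $\Ab=\boldsymbol{\Gamma}\boldsymbol{\Lambda}^{1/2}$ from the reduced spectral decomposition $\bSigma=\boldsymbol{\Gamma}\boldsymbol{\Lambda}\boldsymbol{\Gamma}^\top$, where $\boldsymbol{\Gamma}\in\RR^{p\times q}$ has orthonormal columns and $\boldsymbol{\Lambda}=\mathrm{diag}(\lambda_{1}(\bSigma),\ldots,\lambda_{q}(\bSigma))$, and use the classical representation $\bU=\bg/\|\bg\|_{2}$ with $\bg\sim\cN(\zero,\Ib_{q})$. Substitution and cancellation of $\|\bg\|_{2}^{2}$ yield
$$
\Kb=\boldsymbol{\Gamma}\,\EE\left[\frac{\boldsymbol{\Lambda}^{1/2}\bg\bg^\top\boldsymbol{\Lambda}^{1/2}}{\bg^\top\boldsymbol{\Lambda}\bg}\right]\boldsymbol{\Gamma}^\top.
$$
The sign-flip $g_{j}\mapsto -g_{j}$ preserves the law of $\bg$ and leaves the denominator unchanged, so it wipes out every off-diagonal entry of the inner matrix; the $j$-th diagonal entry is, by inspection, exactly $\EE[\lambda_{j}(\bSigma)g_{j}^{2}/\sum_{l=1}^{q}\lambda_{l}(\bSigma)g_{l}^{2}]$. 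Hence $\boldsymbol{\Gamma}$ diagonalizes $\Kb$ with the claimed eigenvalues, so $\Kb$ and $\bSigma$ share the same eigenspace.

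What remains is the descending order. For indices with $\lambda_{j}(\bSigma)\geq\lambda_{k}(\bSigma)$ I would exploit the exchangeability $g_{j}\stackrel{d}{=}g_{k}$: averaging the expression for $\lambda_{j}(\Kb)-\lambda_{k}(\Kb)$ with the one obtained by swapping $g_{j}\leftrightarrow g_{k}$ and combining over a common denominator produces an integrand that factors as $[\lambda_{j}(\bSigma)-\lambda_{k}(\bSigma)]$ times an almost surely positive function of $\bg$, which pins down the sign and gives strict ordering whenever the underlying eigenvalues are strictly ordered. The only delicate point I anticipate is book-keeping in the rank-deficient case $q<p$, in particular verifying that $\bU^\top\bSigma\bU>0$ a.s.\ so that every expectation is well defined; once attention is restricted to the $q$-dimensional range of $\bSigma$, the remaining manipulations are elementary and rely only on the rotational invariance of $\bg$ together with $\PP(\zeta=0)=0$.
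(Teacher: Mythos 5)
Your argument is correct. Note, however, that the paper does not prove this lemma at all: it states explicitly that the result is cited without proof from \cite{han2018eca}, so there is no in-paper proof to compare against; your write-up is essentially a self-contained reconstruction of the standard argument in that reference. The chain of reductions is sound: $\bX-\tilde{\bX}$ is elliptical with location zero and scatter proportional to $\bSigma$ (e.g.\ via characteristic functions), the generating variate cancels in the self-normalized ratio, and with $\Ab=\bGamma\bLambda^{1/2}$ and $\bU=\bg/\|\bg\|_2$ the sign-flip symmetry $g_j\mapsto -g_j$ kills the off-diagonal entries, yielding $\Kb=\bGamma\,\mathrm{diag}\big(\EE[\lambda_j(\bSigma)g_j^2/\sum_l\lambda_l(\bSigma)g_l^2]\big)\bGamma^\top$. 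Your swap argument for the ordering also checks out: writing $D=\sum_l\lambda_l(\bSigma)g_l^2$ and $D'$ for the denominator after exchanging $g_j\leftrightarrow g_k$, the symmetrized difference has numerator $(\lambda_j(\bSigma)-\lambda_k(\bSigma))\big[2(\lambda_j(\bSigma)+\lambda_k(\bSigma))g_j^2g_k^2+(g_j^2+g_k^2)\sum_{l\ne j,k}\lambda_l(\bSigma)g_l^2\big]$ over $DD'$, whose sign is that of $\lambda_j(\bSigma)-\lambda_k(\bSigma)$, giving strict ordering when the eigenvalues of $\bSigma$ are distinct. The one point you flag as delicate is in fact immediate: since $\mathrm{rank}(\bSigma)=q$ and $\bU$ lives on $S^{q-1}$, $\bU^\top\Ab^\top\Ab\bU>0$ surely and the integrand has Frobenius norm one, so all expectations exist without any moment condition. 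Only two small bookkeeping items are worth adding for completeness: justify that $\PP(\bX=\tilde{\bX})=0$ (which follows from continuity of $\bX$), and note that replacing the given $\Ab$ by $\bGamma\bLambda^{1/2}$ is legitimate because any two $p\times q$ matrices with the same Gram matrix $\Ab\Ab^\top=\bSigma$ differ by a right orthogonal factor, under which the law of $\bU$ is invariant.
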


The proof of Lemma \ref{lemma:lemma1} can be found in \cite{han2018eca}. By Lemma \ref{lemma:lemma1}, estimating the eigenvectors of $\bSigma$ is equivalent to
estimating those of $\Kb$, and thus $\hat \Kb$ fits the goal of estimating the eigenvectors of $\bSigma$.

\subsection{Robust two-step estimation procedure}\label{sec:2.3}
In this section, we introduce an innovative two-step estimation procedure for large-dimensional elliptical factor model. In the first step, we propose to estimate $\Lb$  by the eigenvectors of the spatial Kendall's tau matrix. First, we estimate the spatial Kendall's tau matrix of $\by_t$ by
\begin{equation}\label{equ:hatKyb}
\hat{\Kb}_{y}=\frac{2}{n(n-1)}\sum_{t<t^\prime}\frac{(\by_t-{\by_{t^\prime}})(\by_t-{\by_{t^\prime}})^\top}{\|\by_t-{\by_{t^\prime}}\|_2^2}.
\end{equation}
As the eigenvectors of the spatial Kendall's tau matrix $\Kb_y$ is identical to the eigenvectors of the scatter matrix $\bSigma_y$, thus we estimate the factor Loading matrix $\Lb$ by $\sqrt p$ times the leading $m$ eigenvectors of $\hat{\Kb}_{y}$. In detail, let $\{\hat{\bxi}_1,\ldots,\hat{\bxi}_m\}$ be the leading $m$ eigenvectors of $\hat \Kb_y$ and let $\hat{\bGamma}=(\hat{\bxi}_1,\ldots,\hat{\bxi}_m)$. We take $\hat\Lb=\sqrt{p}\hat{\bGamma}$ as the estimator of the factor loading matrix $\Lb$. The number of factors $m$ is relatively small compared with $p$ and $n$. We first assume that $m$ is known and fixed. If $m$ is unknown, we can estimate $m$ consistently as in \cite{yu2018robust}.

In a second step, we estimate the factors $\{\bbf_t,t=1,\ldots,n\}$ by regressing $\by_t$ on $\hat\Lb$. $\bbf_t$ is estimated by the following least square optimization,
\begin{equation}\label{equ:fthat}
 \hat{\bbf}_t=\bargmin_{\bbeta_t\in \RR^m}\sum_{i=1}^p\big(y_{it}-\hat{\bl}_i^\top\bbeta_t\big)^2, \ \ t=1,\ldots,n,
\end{equation}
where $\hat \bl_i^\top$ is the $i$-th row of $\hat\Lb$, i.e, $\hat\Lb=(\hat{\bl}_1,\ldots,\hat{\bl}_p)^\top$.
For conventional factor model, when both $n$ and $p$ are large, the factor loadings and the factors can be estimated by PCA, which is equivalent to solving a double least-square regression problem, see \cite{Bai2002Determining} or (2.4) in \cite{fan2018}. The two-step estimation procedure is motivated by the idea of the regression formulation.

\section{Theoretical results}
In this section, we investigate the theoretical properties of the proposed estimators $\hat\Lb$ and $\hat\Fb=(\hat \bbf_1,\ldots,\hat\bbf_n)^\top$. We need the following technical assumptions.

\vspace{1em}
\textbf{Assumption A}  We assume that
	\[
	\left(\begin{aligned}
	&\bbf_t\\
	&\bepsilon_t
	 \end{aligned}\right)=\zeta_t\left(\begin{matrix}
	\Ib_m&{\bf 0}\\
	{\bf 0}&\Ab
	\end{matrix}\right)\frac{\bg_t}{\|\bg_t\|},
	\]
	where   $\zeta_t$'s are independent samples of a scalar random variable $\zeta$, and $\bg_t$'s are independent Gaussian samples from $\bg\sim{\cN}({\bf 0},\Ib_{m+p})$. $m$ is fixed. Further, $\zeta$ and $\bg$ are independent and  $\zeta/\sqrt{p}=O_p(1)$ as $p\rightarrow\infty$. Therefore, $(\bbf_t^\top,\bepsilon_t^\top)^\top$ are independent samples  from $ ED({\bf 0},\bSigma_0,\zeta)$ for $t=1,\ldots,n$ where $\bSigma_0=\left(\begin{matrix}
	\Ib_m&\zero\\
	\zero&\bSigma_\epsilon
	\end{matrix}\right)$, and $\bSigma_{\epsilon}=\Ab\Ab^\top$. To make the model identifiable, we further assume that $\big\|\text{diag}(\bSigma_0)\big\|_{\infty}=1$.
	
	\vspace{1em}
	
	\textbf{Assumption B}  Assume $\Lb^\top\Lb/p\rightarrow\Vb$ as $p\rightarrow\infty$, where $\Vb$ is a positive definite matrix. There exist positive constants $c_1,c_2$ such that $c_2\le \lambda_m(\Vb)<\dots<\lambda_1(\Vb)\le c_1$.
	
	\vspace{1em}
	\textbf{Assumption C}  We assume  $c_2\le \lambda_{\min}(\bSigma_{\epsilon})\le\lambda_{\max}(\bSigma_{\epsilon})\le c_1$.
	\vspace{1em}

\textbf{Assumption A} states that $(\bbf_t^\top,\bepsilon_t^\top)^\top$ are i.i.d and follows the elliptical distribution, which further implies that $\by_t$ are i.i.d. from elliptical distribution with scatter matrix $\bSigma_y=\Lb\Lb^\top+\bSigma_{\epsilon}$.  \textbf{Assumption B} assumes $\Lb^\top \Lb/p$ converges to a positive definite matrix with bounded maximum and minimum eigenvalues.
 We also require that  $\lambda_j({\Vb})$ are distinct  to make corresponding eigenvectors identifiable. \textbf{Assumption C} requires that the eigenvalues of the $\bSigma_{\epsilon}$ are bounded from below and above, which in essence makes the idiosyncratic errors negligible relative to the common component. Compared with the moment assumptions of idiosyncratic errors in \cite{Bai2003Inferential}, \textbf{Assumption C} is another typical way to control the cross-sectional correlations of the idiosyncratic errors. The  assumption is common in related literature, see for example,  \cite{fan2013large} and \cite{fan2018}.
 In fact, under \textbf{Assumption B} and \textbf{Assumption C}, further with the Weyl's theorem, we have that the eigenvalues of $\bSigma_{y}$  show the spiked structure which is a common assumption in the large-dimensional factor model literatures, see for example, \cite{Bai2002Determining,Bai2003Inferential,Ahn2013Eigenvalue,fan2013large,fan2018,Trapani2018A}. In other word, the  eigenvalues $\lambda_1(\bSigma_y),
	\ldots,\lambda_m(\bSigma_y)$ are asymptotically proportional to $p$ while the non-spiked eigenvalues $\lambda_j(\bSigma_y), j>m$ are bounded.

{In the following theorem, we show that the estimated loading matrix converges with the rate $O_p(n^{-1}+p^{-2})$ in terms of the averaged squared error after certain rotation.}

\begin{theorem}\label{theorem:1} Under  \textbf{Assumptions A, B, C},  there exist a series of  matrices $\hat\Hb$ (dependent on $n,p$ and $\hat\Lb$) so that $\hat\Hb^\top\Vb\hat\Hb\overset{p}{\rightarrow}\Ib_m$ and
	$$
	\frac{1}{p} \Big\|\hat\Lb-\Lb\hat\Hb\Big\|_F^2=O_p\Big(\frac{1}{n}+\frac{1}{p^2}\Big).
	$$
\end{theorem}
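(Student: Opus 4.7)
The idea is to factor the error $\hat\Lb-\Lb\hat\Hb$ through two intermediate quantities: (i) the matrix $\bGamma$ of top-$m$ eigenvectors of the \emph{population} spatial Kendall's tau matrix $\Kb_y$, and (ii) the matrix $\mathbf{U}$ of top-$m$ eigenvectors of $\Lb\Lb^\top$, whose columns span the column space of $\Lb$. By Lemma \ref{lemma:lemma1}, $\bGamma$ also diagonalises $\bSigma_y=\Lb\Lb^\top+\bSigma_\epsilon$, so that $\hat\bGamma\to\bGamma$ is a purely statistical perturbation (producing the $1/n$ term) while $\bGamma\to\mathbf{U}$ is a purely deterministic perturbation driven by the spiked structure of $\bSigma_y$ (producing the $1/p^2$ term). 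The rotation $\hat\Hb$ will combine the orthogonal alignments produced by the two perturbation steps with the diagonal rescaling that converts the orthonormal $\mathbf{U}$ back into $\Lb/\sqrt p$.

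\textbf{Step 1 (statistical $1/n$ term).} Lemma \ref{lemma:lemma1} together with the spiked structure of $\bSigma_y$ ($\lambda_j(\bSigma_y)\asymp p$ for $j\le m$ and $\lambda_j(\bSigma_y)=O(1)$ for $j>m$, by Assumptions B and C) implies that the top $m$ eigenvalues of $\Kb_y$ are of constant order while the remaining ones are $O(1/p)$, so the population eigen-gap $\delta_1$ is bounded away from zero. A restricted $\sin\Theta$ inequality yields an orthogonal $\hat{\mathbf{R}}_1$ with
\[
\|\hat\bGamma-\bGamma\hat{\mathbf{R}}_1\|_F \;\le\; C\,\delta_1^{-1}\,\|(\hat\Kb_y-\Kb_y)\bGamma\|_F .
\]
To bound the right-hand side, Hoeffding-decompose the second-order U-statistic $\hat\Kb_y-\Kb_y$: each summand $\mathbf{P}_{tt'}:=(\by_t-\by_{t'})(\by_t-\by_{t'})^\top/\|\by_t-\by_{t'}\|_2^2$ is a rank-one projection, so $\|(\mathbf{P}_{tt'}-\Kb_y)\bxi_j\|_2\le 2$; the leading projection term is then a sample mean of i.i.d.\ bounded vectors, giving $\|(\hat\Kb_y-\Kb_y)\bxi_j\|_2^2=O_p(1/n)$ for each $j\le m$, and hence $\|\hat\bGamma-\bGamma\hat{\mathbf{R}}_1\|_F^2=O_p(1/n)$.

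\textbf{Step 2 (deterministic $1/p^2$ term) and identification.} Take the compact SVD $\Lb/\sqrt p=\mathbf{U}\mathbf{S}\mathbf{W}^\top$ with $\mathbf{U}^\top\mathbf{U}=\mathbf{W}^\top\mathbf{W}=\Ib_m$, so that $\mathbf{W}\mathbf{S}^2\mathbf{W}^\top=\Lb^\top\Lb/p\to\Vb$ and $\sqrt p\,\mathbf{U}=\Lb\mathbf{W}\mathbf{S}^{-1}$. Since $\Kb_y$ and $\bSigma_y$ share eigenvectors (Lemma \ref{lemma:lemma1}), $\bGamma$ is also the top-$m$ eigenbasis of $\bSigma_y$; viewing $\bSigma_y$ as the perturbation of $\Lb\Lb^\top$ by $\bSigma_\epsilon$, with eigen-gap $\Theta(p)$ and operator-norm perturbation $\|\bSigma_\epsilon\|\le c_1$, a fixed-rank $\sin\Theta$ bound produces an orthogonal $\hat{\mathbf{R}}_2$ with
\[
\|\bGamma-\mathbf{U}\hat{\mathbf{R}}_2\|_F\;\le\;C\sqrt m\,\|\bSigma_\epsilon\|/\lambda_m(\Lb\Lb^\top)\;=\;O(1/p).
\]
Setting $\hat\Hb:=\mathbf{W}\mathbf{S}^{-1}\hat{\mathbf{R}}_2\hat{\mathbf{R}}_1$, so that $\Lb\hat\Hb=\sqrt p\,\mathbf{U}\hat{\mathbf{R}}_2\hat{\mathbf{R}}_1$, the triangle inequality gives
\[
\tfrac1p\|\hat\Lb-\Lb\hat\Hb\|_F^2=\|\hat\bGamma-\mathbf{U}\hat{\mathbf{R}}_2\hat{\mathbf{R}}_1\|_F^2\;\le\;2\|\hat\bGamma-\bGamma\hat{\mathbf{R}}_1\|_F^2+2\|\bGamma-\mathbf{U}\hat{\mathbf{R}}_2\|_F^2=O_p\!\left(\tfrac1n+\tfrac1{p^2}\right).
\]
The identification claim follows from $\hat\Hb^\top\Vb\hat\Hb=\hat{\mathbf{R}}_1^\top\hat{\mathbf{R}}_2^\top\mathbf{S}^{-1}(\mathbf{W}^\top\Vb\mathbf{W})\mathbf{S}^{-1}\hat{\mathbf{R}}_2\hat{\mathbf{R}}_1$ together with $\mathbf{W}^\top\Vb\mathbf{W}-\mathbf{S}^2=o(1)$, which is a consequence of $\mathbf{W}\mathbf{S}^2\mathbf{W}^\top\to\Vb$ at fixed $m$ with distinct eigenvalues (Assumption B).

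\textbf{Main obstacle.} Step 1 is the crux. Although each $\mathbf{P}_{tt'}$ is a uniformly bounded rank-one matrix, two difficulties flagged in the introduction intervene: the summands are \emph{dependent} across pairs that share an index, so matrix Bernstein does not apply verbatim; and the normalising denominator $\|\by_t-\by_{t'}\|_2^2$ makes each summand a \emph{nonlinear} functional of the data. The Hoeffding projection of the U-statistic, combined with a careful concentration argument for the degenerate remainder in the spirit of \cite{han2018eca}, is what permits the $1/n$ rate without any moment assumption on the generator $\zeta$.
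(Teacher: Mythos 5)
Your proposal is correct in substance but follows a genuinely different route from the paper. The paper never passes through the population Kendall matrix at all: it starts from the exact eigen-equation $\hat\Kb_y\hat\Lb=\hat\Lb\hat\bLambda$, splits each summand of $\hat\Kb_y$ according to $\by_t-\by_s=\Lb(\bbf_t-\bbf_s)+(\bepsilon_t-\bepsilon_s)$ into four U-statistics $\Mb_1,\dots,\Mb_4$, chooses $\hat\Hb=\Mb_1\Lb^\top\hat\Lb\hat\bLambda^{-1}$ so that $\hat\Lb-\Lb\hat\Hb=(\Mb_2\Lb^\top\hat\Lb+\Lb\Mb_3\hat\Lb+\Mb_4\hat\Lb)\hat\bLambda^{-1}$ holds exactly, and then bounds each noise term by (i) a permutation/blocking device that rewrites the dependent U-statistic as an average over i.i.d.\ pairs and (ii) explicit Gaussian computations after an orthogonalizing change of variables $(\bu_1,\bu_2)$ exploiting the elliptical representation (Lemmas S1--S4). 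Your two-stage $\sin\Theta$ argument --- empirical-to-population Kendall eigenspace via the Hoeffding/variance decomposition of the bounded kernel (the $1/n$ term), then population Kendall eigenspace to the column space of $\Lb$ via a deterministic Davis--Kahan bound with gap $\asymp p$ and perturbation $\bSigma_\epsilon$ (the $1/p^2$ term) --- is more modular and, for Theorem \ref{theorem:1} in isolation, arguably more elementary: because each summand of $\hat\Kb_y$ is a rank-one projection, the variance bound $\EE\|(\hat\Kb_y-\Kb_y)\bxi_j\|^2=O(1/n)$ follows from the standard U-statistic covariance count with no moment conditions and no decoupling trickery, and the identification step via the compact SVD of $\Lb/\sqrt p$ is clean. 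What the paper's heavier machinery buys is a set of sharper intermediate bounds ($\|\Mb_2\|_F^2=O_p(1/(np)+1/p^3)$, $\|\Lb^\top(\hat\Lb-\Lb\hat\Hb)/p\|_F^2=O_p(1/n^2+1/p^2)$, etc.) attached to one explicit $\hat\Hb$ that is reused verbatim in Theorems \ref{theorem:2} and \ref{theorem:3}; your $\hat\Hb$ is a different and less explicit rotation, so the later theorems would have to be re-derived for it. Two details you should make explicit: the restricted $\sin\Theta$ inequality in Step 1 needs the gap $\lambda_m(\Kb_y)-\lambda_{m+1}(\hat\Kb_y)$ bounded below, which requires $\lambda_{m+1}(\hat\Kb_y)=o_p(1)$ (this is precisely Lemma \ref{LemmaA.1yu}, not just the population eigen-gap); and the ``careful concentration argument for the degenerate remainder'' you defer to is not needed --- a second-moment bound already gives $O(1/n^2)$ for that part because the kernel is uniformly bounded.
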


In Theorem \ref{theorem:1}, we obtain the same convergence rate of the estimated factor loadings as that in \cite{Bai2003Inferential}.
However, we impose no moment constrains on the factors and idiosyncratic errors.
In the following theorem, we establish the convergence rate of the estimated factor scores $\hat \bbf_t$.

\begin{theorem}\label{theorem:2} Assume that  \textbf{Assumptions A, B, C} hold, then for any $t\le n$,
	$$
\|\hat\Hb\hat\bbf_t-\bbf_t\|^2=O_p\Big(\frac{1}{p}+\frac{1}{n^2}\Big).
	$$
\end{theorem}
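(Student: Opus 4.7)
The plan is to exploit the closed form $\hat\bbf_t = p^{-1}\hat\Lb^\top\by_t$, which follows immediately from the built-in normalization $\hat\Lb^\top\hat\Lb = p\Ib_m$. Substituting $\by_t = \Lb\bbf_t+\bepsilon_t$ and $\hat\Lb = \Lb\hat\Hb + \Rb$ with $\Rb := \hat\Lb - \Lb\hat\Hb$, then left-multiplying by $\hat\Hb$ and subtracting $\bbf_t$, I would first establish the four-term decomposition
\begin{equation*}
\hat\Hb\hat\bbf_t - \bbf_t = \Big[\tfrac{\hat\Hb\hat\Hb^\top\Lb^\top\Lb}{p} - \Ib_m\Big]\bbf_t + \tfrac{\hat\Hb\hat\Hb^\top\Lb^\top\bepsilon_t}{p} + \tfrac{\hat\Hb\Rb^\top\Lb\,\bbf_t}{p} + \tfrac{\hat\Hb\Rb^\top\bepsilon_t}{p}.
\end{equation*}
The strategy is to show the second term supplies the $1/p$ piece of the target and each of the other three contributes $O_p(n^{-2}+p^{-2})$ in squared norm.

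The basic magnitudes come from the elliptical representation of Assumption~A: writing $\bepsilon_t = (\zeta_t/\|\bg_t\|)\Ab\bg_t^{(2)}$ where $\bg_t^{(2)}\sim\cN(\zero,\Ib_p)$ denotes the last $p$ coordinates of $\bg_t$, the scalar $\zeta_t/\|\bg_t\|$ is $O_p(1)$ because $\zeta_t/\sqrt{p}=O_p(1)$ and $\|\bg_t\|^2/(m+p)\to 1$. A direct second-moment computation gives $\|\Lb^\top\bepsilon_t/p\|^2 = O_p(1/p)$ using ${\rm Tr}(\Lb^\top\bSigma_\epsilon\Lb)\le\|\bSigma_\epsilon\|{\rm Tr}(\Lb^\top\Lb)=O(p)$, while $\|\bepsilon_t\|^2=O_p(p)$ and $\|\bbf_t\|=O_p(1)$. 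Combined with $\|\hat\Hb\|,\|\hat\Hb^{-1}\|,\|\hat\Hb\hat\Hb^\top\|=O_p(1)$ (which follow from $\hat\Hb^\top\Vb\hat\Hb\to\Ib_m$ together with $c_2\le\lambda_m(\Vb)$), the second term is $O_p(p^{-1/2})$ in norm, furnishing the $1/p$ contribution.

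The main obstacle is sharpening the three $\Rb$-containing pieces past the slow rate $O_p(n^{-1}+p^{-2})$ that naive Cauchy--Schwarz with Theorem~\ref{theorem:1} would give. For the first and third terms, I would exploit the identity $\hat\Lb^\top\hat\Lb = p\Ib_m$ which, expanded with $\hat\Lb=\Lb\hat\Hb+\Rb$, rearranges to
\[
\hat\Hb^\top\Lb^\top\Rb + \Rb^\top\Lb\hat\Hb = p\Ib_m - \hat\Hb^\top\Lb^\top\Lb\hat\Hb - \Rb^\top\Rb.
\]
The bound $\|\Rb\|_F^2 = O_p(p/n+1/p)$ from Theorem~\ref{theorem:1}, together with the rate at which $\hat\Hb^\top(\Lb^\top\Lb/p)\hat\Hb\to\Ib_m$ read off from its proof, pins down the symmetric part of $\Rb^\top\Lb\hat\Hb/p$ at rate $O_p(n^{-1}+p^{-1})$. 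To recover the full operator norm $\|\Rb^\top\Lb/p\|$, I would substitute the first-order expansion $\Rb \approx p^{-1}(\hat\Kb_y-\Kb_y)\Lb\hat\Hb\hat\Db_m^{-1}$ implied by the eigenvector equation $\hat\Kb_y\hat\Lb=\hat\Lb\hat\Db_m$ (with $\hat\Db_m$ the diagonal matrix of the top $m$ eigenvalues of $\hat\Kb_y$), and apply the $U$-statistic Bernstein-type concentration of $\Lb^\top(\hat\Kb_y-\Kb_y)\Lb/p^2$ underlying Theorem~\ref{theorem:1}. This yields $\|\Rb^\top\Lb/p\|=O_p(n^{-1}+p^{-1})$ and, via an analogous manipulation, $\|\hat\Hb\hat\Hb^\top(\Lb^\top\Lb/p)-\Ib_m\|=O_p(n^{-1}+p^{-1})$, so both the first and third terms contribute $O_p(n^{-2}+p^{-2})$ in squared norm.

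The hardest piece is the fourth term $p^{-1}\hat\Hb\Rb^\top\bepsilon_t$, for which no symmetry shortcut is available. Substituting the first-order representation of $\Rb$ reduces the task to bounding $\Lb^\top(\hat\Kb_y-\Kb_y)\bepsilon_t/p^2$. Splitting the $U$-statistic summands into the $O(n)$ containing the sample $\bepsilon_t$ itself (handled by Cauchy--Schwarz using $\|\bepsilon_t\|^2=O_p(p)$ and contributing $O_p(n^{-2})$) versus the $\binom{n}{2}-O(n)$ remaining summands, which are conditionally independent of $\bepsilon_t$ and handled by a conditional Bernstein concentration, one obtains $O_p(n^{-1}+p^{-1})$ in norm for this term. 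Summing the four contributions produces the claimed rate $\|\hat\Hb\hat\bbf_t-\bbf_t\|^2 = O_p(p^{-1}+n^{-2})$.
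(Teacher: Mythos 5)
Your overall strategy coincides with the paper's: both start from the closed form $\hat\bbf_t=p^{-1}\hat\Lb^\top\by_t$, both isolate $p^{-1}\Lb^\top\bepsilon_t$ as the term responsible for the $1/p$ rate (via ${\rm Tr}(\Lb^\top\bSigma_{\epsilon}\Lb)=O(p)$ and the elliptical representation), and both recognize that the terms involving $\hat\Lb-\Lb\hat\Hb$ must be pushed to the sharper rate $O_p(n^{-2}+p^{-2})$ rather than the $O_p(n^{-1})$ that Cauchy--Schwarz with Theorem \ref{theorem:1} alone would give. Your four-term decomposition is a regrouping of the paper's three-term one, and the two sharp bounds you need are exactly the paper's Lemmas S5 and S6, namely $\|p^{-1}\hat\Lb^\top(\hat\Lb-\Lb\hat\Hb)\|_F^2=O_p(n^{-2}+p^{-2})$ and $\|p^{-1}(\hat\Lb-\Lb\hat\Hb)^\top\bepsilon_t\|_F^2=O_p(n^{-2}+p^{-2})$.

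Where you diverge is in how those two bounds are obtained, and there your sketch has genuine gaps. First, you propose the approximate expansion $\hat\Lb-\Lb\hat\Hb\approx p^{-1}(\hat\Kb_y-\Kb_y)\Lb\hat\Hb\hat\Db_m^{-1}$; but $\hat\Hb$ in this theorem is not arbitrary---it is the specific matrix $\hat\Hb=\Mb_1\Lb^\top\hat\Lb\hat\bLambda^{-1}$ constructed in the proof of Theorem \ref{theorem:1}, chosen precisely so that the \emph{exact} identity $\hat\Lb-\Lb\hat\Hb=(\Mb_2\Lb^\top\hat\Lb+\Lb\Mb_3\hat\Lb+\Mb_4\hat\Lb)\hat\bLambda^{-1}$ holds with no remainder. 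With your perturbative expansion you would additionally have to show that the neglected remainder, after contraction with $\Lb^\top/p$ or $\bepsilon_t^\top/p$, is itself $O_p(n^{-1}+p^{-1})$, which is essentially the original difficulty restated; this step is not addressed. Second, your appeal to a ``U-statistic Bernstein-type concentration'' is exactly the tool the paper points out is unavailable: the $n(n-1)/2$ self-normalized summands of $\hat\Kb_y$ share observations and are dependent, and no moments of $\zeta$ are assumed. The paper circumvents this with Hoeffding's averaging-over-permutations device (writing each $\Mb_j$ as an average over permutations of sums of $n/2$ i.i.d.\ disjoint pairs) and then bounds only first and second moments of a single pair, after the Gaussian change of variables $(\bu_1,\bu_2)$ that decouples the numerator from the self-normalizing denominator and exploits $\|\Ib_m-\Lb^\top\bSigma_y^{-1}\Lb\|_F=O(p^{-1})$ from Lemma S1. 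Your treatment of the term $p^{-1}(\hat\Lb-\Lb\hat\Hb)^\top\bepsilon_t$---splitting off the $O(n)$ summands that contain $\bepsilon_t$ and using independence for the rest---is the right idea and matches Lemma S6, but the ``conditionally independent'' block must again be controlled by the permutation/second-moment argument rather than by a concentration inequality for independent sums. Until these two lemmas are actually established, the claimed rate is not proved.
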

By the results in Theorem \ref{theorem:1}  and Theorem \ref{theorem:2}, we finally show that the estimated common components are consistent to the true ones.

\begin{theorem}\label{theorem:3}
	Assume that  \textbf{Assumptions A, B, C} hold, we have that for any $t\le n$,
	$$
\frac{1}{p}\Big\|\hat\Lb\hat\bbf_t-\Lb\bbf_t\Big\|^2=O_p\Big(\frac{1}{{n}}+\frac{1}{p}\Big).
	$$
\end{theorem}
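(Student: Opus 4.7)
The plan is to derive Theorem \ref{theorem:3} as a direct corollary of Theorems \ref{theorem:1} and \ref{theorem:2} via a single triangle-inequality decomposition. Writing
\[
\hat\Lb\hat\bbf_t - \Lb\bbf_t = (\hat\Lb - \Lb\hat\Hb)\hat\bbf_t + \Lb(\hat\Hb\hat\bbf_t - \bbf_t),
\]
and using $\|a+b\|^2\le 2\|a\|^2+2\|b\|^2$ together with the operator-norm bound $\|A\bx\|\le \|A\|_F\|\bx\|$, I get
\[
\frac{1}{p}\|\hat\Lb\hat\bbf_t-\Lb\bbf_t\|^2 \le \frac{2}{p}\|\hat\Lb-\Lb\hat\Hb\|_F^2\cdot\|\hat\bbf_t\|^2+\frac{2}{p}\|\Lb\|^2\cdot\|\hat\Hb\hat\bbf_t-\bbf_t\|^2.
\]

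For the second summand, Assumption B gives $\|\Lb\|^2=\lambda_1(\Lb^\top\Lb)=O(p)$, so $\frac{1}{p}\|\Lb\|^2=O(1)$, and Theorem \ref{theorem:2} contributes $O_p(1/p+1/n^2)$. For the first summand, Theorem \ref{theorem:1} provides $\frac{1}{p}\|\hat\Lb-\Lb\hat\Hb\|_F^2=O_p(1/n+1/p^2)$, so the only remaining task is to show $\|\hat\bbf_t\|^2=O_p(1)$.

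To establish the latter, I first note that $\|\bbf_t\|^2=O_p(1)$: by Assumption A, $\bbf_t=\zeta_t\bg_{t,1:m}/\|\bg_t\|$ with $\bg_{t,1:m}\sim\cN(\zero,\Ib_m)$ independent of $\bg_t\sim\cN(\zero,\Ib_{m+p})$, hence $\|\bg_t\|^2\asymp p$ and, using $\zeta_t/\sqrt p=O_p(1)$, $\|\bbf_t\|^2=O_p(1)$. Next, since Theorem \ref{theorem:1} gives $\hat\Hb^\top\Vb\hat\Hb\overset{p}{\to}\Ib_m$ with $\Vb$ positive definite and having eigenvalues bounded away from $0$ and $\infty$, the matrix $\hat\Hb$ is asymptotically invertible with $\|\hat\Hb^{-1}\|=O_p(1)$. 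Writing $\hat\bbf_t=\hat\Hb^{-1}(\hat\Hb\hat\bbf_t-\bbf_t)+\hat\Hb^{-1}\bbf_t$ and invoking Theorem \ref{theorem:2} yields $\|\hat\bbf_t\|=O_p(1)$.

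Combining everything, the first summand is $O_p(1/n+1/p^2)\cdot O_p(1)$ and the second is $O_p(1/p+1/n^2)$, so the right-hand side is $O_p(1/n+1/p)$, as claimed. There is no substantial obstacle: the argument is essentially rate-bookkeeping; the only subtle ingredient is extracting $\|\hat\bbf_t\|=O_p(1)$ from the asymptotic invertibility of $\hat\Hb$ together with the elliptical representation of $\bbf_t$, and both are already encoded in the hypotheses and preceding theorems.
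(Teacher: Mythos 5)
Your proof is correct and takes essentially the same approach as the paper: a single triangle-inequality decomposition combining the rates from Theorems \ref{theorem:1} and \ref{theorem:2}, together with the asymptotic invertibility of $\hat\Hb$ and the bound $\|\bbf_t\|^2=O_p(1)$. The only difference is cosmetic --- the paper pairs the loading error with $\bbf_t$ and the factor error with $\hat\Lb\hat\Hb^{-1}$, whereas you pair the loading error with $\hat\bbf_t$ (requiring the extra, correctly handled, step $\|\hat\bbf_t\|=O_p(1)$) and the factor error with $\Lb$.
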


As far as we know, this is the first time that consistent estimators for the factor loadings, scores and common components are proposed without any moment constraints. {Under the elliptical assumption containing heavy-tailed cases, our RTS estimators converge at the same rates as those of the PCA estimators with finite fourth moment constrains on the factors and errors, see \cite{Bai2003Inferential}.}

\section{Simulation Study}\label{ss}

	In this section, we conduct thorough  simulation studies to compare the Robust Two-Step (RTS) estimator with the conventional PCA method.
	We use similar data-generating models as in \cite{Ahn2013Eigenvalue}, \cite{Xia2017Transformed} and \cite{yu2018robust}. We generate the data from the following model,
	\begin{align}
	 &y_{it}=\sum\limits_{j=1}^{m}L_{ij}f_{jt}+\sqrt{\theta}u_{it},\quad u_{it}=\sqrt{\frac{1-\rho^2}{1+2J\beta^2}}e_{it}, \nonumber \\
&e_{it}=\rho e_{i,t-1}+(1-\beta)v_{it}+\sum_{l={\rm max}\{i-J,1\}}^{{\rm min}\{i+J,p\}}\beta v_{lt}, \ \ t=1,\ldots,n, \ \ i=1,\ldots,p,\nonumber
	\end{align}
	where  $\bbf_t=(f_{1t},\ldots,f_{mt})^\top$ and $\bv_t=(v_{1t},\ldots,v_{pt})^\top$ are jointly generated from  elliptical distributions. We let $L_{ij}$ be independently drawn from the standard normal distribution. The parameter $\theta$ controls the SNR (signal to noise ratio), $\rho$ controls the serial correlations of idiosyncratic errors, and $\beta$ and $J$ control the cross-sectional correlations. We point out that although we  assume $\by_t$'s are temporally independent theoretically in \textbf{Assumption A}, we allow $\bu_t$ to be serially correlated in the simulation studies.

\begin{table}[!h]
  \caption{Simulation results for Scenario A, the values in the parentheses are the interquartile ranges for MEE-CC and standard deviations for AVE-FL and AVE-FS.}
  \label{tab:1}
  \renewcommand{\arraystretch}{1.3}
  \centering
  \selectfont
  \begin{threeparttable}
   \scalebox{0.9}{ \begin{tabular*}{16cm}{ccccccccccccccccccccccccccccc}
\toprule[2pt]
&\multirow{2}{*}{Type}&\multirow{2}{*}{Method}  &\multicolumn{3}{c}{$(p,n)=(150,100)$}&\multicolumn{3}{c}{$(p,n)=(250,100)$} \cr
\cmidrule(lr){4-6} \cmidrule(lr){7-9}
&&                 &$\text{MEE\_CC}$     &$\text{AVE\_FL}$      &$\text{AVE\_FS}$      &$\text{MEE\_CC}$      &$\text{AVE\_FL}$      &$\text{AVE\_FS}$  \\
\midrule[1pt]
&$\mathcal{N}(\zero,\Ib_{p+m})$   &RTS     &0.02(0.00)	&0.11(0.01)	&0.08(0.01)	&0.01(0.00)	 &0.11(0.01)	&0.06(0.00) \\
&                                 &PCA     &0.02(0.00)	&0.10(0.01)	&0.08(0.01)	&0.01(0.00)	 &0.10(0.01)	&0.06(0.00) \\
\cmidrule(lr){4-9}
&$t_{3}(\zero,\Ib_{p+m})$         &RTS     &0.02(0.00)	&0.12(0.01)	&0.08(0.01)	&0.02(0.00)	 &0.11(0.01)	&0.07(0.01) \\
&                                 &PCA     &0.04(0.03)	&0.20(0.06)	&0.10(0.04)	&0.04(0.03)	 &0.20(0.06)	&0.08(0.04) \\
\cmidrule(lr){4-9}
&$t_{2}(\zero,\Ib_{p+m})$         &RTS     &0.02(0.01)  &0.12(0.01)	&0.09(0.02)	 &0.02(0.00)    &0.12(0.01)	&0.07(0.01) \\
&                                 &PCA     &0.09(0.11)	&0.30(0.10)	&0.14(0.09)	 &0.09(0.10) 	 &0.29(0.10)	&0.12(0.08) \\
\cmidrule(lr){4-9}
&$t_{1}(\zero,\Ib_{p+m})$         &RTS     &0.02(0.01)    &0.12(0.01)	&0.09(0.04)	 &0.02(0.01)          &0.12(0.01)	&0.07(0.03) \\
&                                 &PCA     &0.29(0.29)	&0.52(0.12)	&0.27(0.16)	 &0.29(0.29)  	 &0.52(0.12)	&0.26(0.16) \\
\cmidrule(lr){4-9}
&Skewed $t_{3}$                     &RTS     &0.02(0.00)  &0.12(0.01)	&0.08(0.01)	 &0.02(0.00)    &0.11(0.01)	&0.07(0.01) \\
&                                 &PCA     &0.04(0.03)	&0.20(0.06)	&0.10(0.03)	 &0.04(0.03) 	 &0.20(0.06)	&0.09(0.04) \\
\cmidrule(lr){4-9}
&$\alpha$-stable                  &RTS     &0.06(0.02)  &0.19(0.01)	&0.19(0.06)	 &0.06(0.02)    &0.19(0.01)	&0.16(0.06) \\
&                                 &PCA     &0.14(0.70)	&0.39(0.21)	&0.37(0.22)	 &0.16(0.80)  	 &0.41(0.21)	&0.37(0.23) \\
\midrule[1pt]
&\multirow{2}{*}{Type}&\multirow{2}{*}{Method}  &\multicolumn{3}{c}{$(p,n)=(250,150)$}&\multicolumn{3}{c}{$(p,n)=(250,200)$} \cr
\cmidrule(lr){4-6} \cmidrule(lr){7-9}
&&                   &$\text{MEE\_CC}$     &$\text{AVE\_FL}$      &$\text{AVE\_FS}$      &$\text{MEE\_CC}$      &$\text{AVE\_FL}$      &$\text{AVE\_FS}$  \\
\midrule[1pt]
&$\mathcal{N}(\zero,\Ib_{p+m})$  &RTS    &0.01(0.00)	&0.09(0.00)	&0.06(0.00)	&0.01(0.00)	 &0.07(0.00)	&0.06(0.00)	\\
&                                &PCA    &0.01(0.00)	&0.08(0.00)	&0.06(0.00)	&0.01(0.00)	 &0.07(0.00)	&0.06(0.00)  \\
\cmidrule(lr){4-9}
&$t_{3}(\zero,\Ib_{p+m})$        &RTS    &0.01(0.00)	&0.09(0.00)	&0.06(0.01)	&0.01(0.00)	 &0.08(0.00)	&0.06(0.01) 	\\
&                                &PCA    &0.03(0.02)	&0.18(0.06)	&0.08(0.04)	&0.03(0.02)	 &0.16(0.05)	&0.08(0.02)\\
\cmidrule(lr){4-9}
&$t_{2}(\zero,\Ib_{p+m})$        &RTS    &0.01(0.00)    &0.10(0.01)	&0.07(0.01)	 &0.01(0.00) &0.08(0.00)	&0.06(0.01) \\
&                                &PCA    &0.08(0.10)	&0.27(0.10)	&0.12(0.08)	 &0.08(0.09) &0.27(0.10)	&0.11(0.08)     \\
\cmidrule(lr){4-9}
&$t_{1}(\zero,\Ib_{p+m})$        &RTS    &0.01(0.01)    &0.10(0.01)	&0.07(0.03)	 &0.01(0.00) &0.09(0.00)	&0.07(0.04) \\
&                                &PCA    &0.28(0.31)    &0.52(0.12)	&0.27(0.16)	 &0.28(0.27) &0.51(0.12)	&0.26(0.16)  \\
\cmidrule(lr){4-9}
&Skewed $t_{3}$                    &RTS     &0.01(0.00)  &0.09(0.00)	&0.06(0.01)	 &0.01(0.00)    &0.08(0.00)	&0.06(0.01) \\
&                                &PCA     &0.03(0.03)	&0.18(0.05)	&0.08(0.03)	 &0.03(0.02) 	 &0.16(0.06)	&0.08(0.04) \\
\cmidrule(lr){4-9}
&$\alpha$-stable                 &RTS     &0.04(0.01)    &0.16(0.01)	&0.15(0.05)	 &0.04(0.01)&0.14(0.01)	&0.15(0.06) \\
&                                &PCA     &0.15(0.82)	&0.39(0.23)	&0.37(0.24)	 &0.13(0.83)  	 &0.39(0.24)	&0.38(0.24) \\
\bottomrule[2pt]
  \end{tabular*}}
  \end{threeparttable}
\end{table}

Before we give the data generating scenarios, we first review the multivariate $t$ distribution.  The Probability Distribution Function (PDF) of a $d$-dimensional multivariate $t$ distribution $t_{\nu}(\bmu,\bSigma_{d\times d})$ is
	\begin{displaymath}
	 \frac{{\Gamma\big((\nu+d)/2\big)}}{\Gamma(\nu/2)\nu^{d/2}\pi^{d/2}|\bSigma|^{1/2}}\bigg\{1+\frac{1}{\nu}(\bx-\bmu)^\top\bSigma^{-1}(\bx-\bmu)\bigg\}^{-(\nu+d)/2},
	\end{displaymath}
	 {where $\Gamma(\cdot)$ is the gamma function}. In fact, multivariate $t$ distribution with  $\nu=1$ is the multivariate Cauchy distribution that has no finite mean. We also consider the following data generating scenarios in the simulation studies.

\vspace{0.5em}

	\textbf {Scenario A} Set $m=3,\theta=1,\rho=\beta=J=0$, $(p,n)=\big\{(150,100),(250,100),(250,150),(250,200)\big\}$, $(\bbf_t^\top,\bv_t^\top)^\top$ are generated in the following ways: (i) $(\bbf_t^\top,\bv_t^\top)^\top$ are \emph{i.i.d.} jointly elliptical random samples from multivariate Gaussian distributions $\mathcal{N}(\zero,\Ib_{p+m})$; (ii) $(\bbf_t^\top,\bv_t^\top)^\top$ are \emph{i.i.d.} jointly elliptical random samples from  multivariate centralized $t$ distributions $t_{\nu}(\zero,\Ib_{p+m})$ with $\nu=3, 2, 1$; (iii) $(\bbf_t^\top,\bv_t^\top)^\top$ are \emph{i.i.d.} samples from  multivariate skewed-$t_3$ distribution; (iv) $\bbf_t^\top$ are \emph{i.i.d.} random samples from multivariate Gaussian distributions $\mathcal{N}(\zero,\Ib_{m})$ while the elements of  $\bv_t$ are \emph{i.i.d.} random samples from symmetric $\alpha$-Stable distribution $S_{\alpha}(\beta, \gamma, \gamma)$ with skewness parameter
$\beta=0$, scale parameter $\gamma=1$ and location parameter $\delta=0$, $\alpha=1.8$.

\vspace{0.5em}

	\textbf {Scenario B} Set $m=3,\theta=1,\rho=0.5,\beta=0.2,J={\rm max}\{10, p/20\}$, $(\bbf_t^\top,\bv_t^\top)$ are generated in the same ways as in Scenario A. $(p,n)=\{(150,100),(250,100),(250,150),(250,200)\}$.

\vspace{0.5em}

	\textbf {Scenario C} Set $m=3,\theta=1,\rho=0.5,\beta=0.2,J={\rm max}\{10, p/20\}, (p,n)=(250,100)$, $(\bbf_t^\top,\bv_t^\top)$ are \emph{i.i.d.} jointly elliptical random vectors from multivariate Gaussian $\mathcal{N}(\zero,\Db)$ and multivariate centralized $t$ distribution $t_{\nu}(\zero,\Db)$ with $\nu=3$, where {$\Db$} is a diagonal matrix of dimension $(p+m)\times(p+m)$  with $\mathrm{D}_{ii}=1,i\neq 3$ and $\mathrm{D}_{33}=\text{SNR}$ with $\text{SNR}$ from $\{0.7,0.6,0.5,0.4\}$.

\vspace{0.5em}

\begin{table}[!h]
  \caption{Simulation results for Scenario B, the values in the parentheses are the interquartile ranges for MEE-CC and standard deviations for AVE-FL, AVE-FS.}
  \label{tab:2}
  \renewcommand{\arraystretch}{1.4}
  \centering
  \selectfont
  \begin{threeparttable}
   \scalebox{0.9}{ \begin{tabular*}{16cm}{ccccccccccccccccccccccccccccc}
\toprule[2pt]
&\multirow{2}{*}{Type}&\multirow{2}{*}{Method}  &\multicolumn{3}{c}{$(p,n)=(150,100)$}&\multicolumn{3}{c}{$(p,n)=(250,100)$} \cr
\cmidrule(lr){4-6} \cmidrule(lr){7-9}
&&                 &$\text{MEE\_CC}$     &$\text{AVE\_FL}$      &$\text{AVE\_FS}$      &$\text{MEE\_CC}$      &$\text{AVE\_FL}$      &$\text{AVE\_FS}$  \\
\midrule[1pt]
&$\mathcal{N}(\zero,\Ib_{p+m})$  &RTS    &0.02(0.01)	&0.11(0.02)	&0.09(0.01)	&0.02(0.00)	 &0.11(0.01)	&0.07(0.01)   \\
&                                &PCA   &0.02(0.01)	 &0.11(0.02)	&0.09(0.01)	&0.02(0.00)	 &0.11(0.01)	 &0.07(0.01)   \\
\cmidrule(lr){4-9}
&$t_{3}(\zero,\Ib_{p+m})$        &RTS    &0.02(0.01)	&0.13(0.02)	&0.10(0.02)	&0.02(0.01)	 &0.13(0.02)	&0.08(0.02)   \\
&                                &PCA   &0.04(0.03)	 &0.20(0.07)	&0.12(0.07)	&0.04(0.03)	 &0.20(0.07)	 &0.10(0.06)   \\
\cmidrule(lr){4-9}
&$t_{2}(\zero,\Ib_{p+m})$        &RTS    &0.03(0.02)    &0.15(0.03)	&0.12(0.05)	&0.02(0.01)	 &0.14(0.03)	&0.09(0.05)  \\
&                                &PCA   &0.09(0.11)	 &0.31(0.13)	&0.21(0.14)	&0.08(0.10)  &0.29(0.13)	 &0.18(0.14)  \\
\cmidrule(lr){4-9}
&$t_{1}(\zero,\Ib_{p+m})$        &RTS    &0.08(0.16)	&0.29(0.14)	&0.28(0.17)	 &0.06(0.11)	    &0.27(0.14)	&0.25(0.18)  \\
&                                &PCA   &0.28(0.30)     &0.55(0.13)	&0.44(0.17)	 &0.28(0.32)	&0.55(0.13)	 &0.43(0.18)  \\
\cmidrule(lr){4-9}
&Skewed $t_{3}$                     &RTS     &0.02(0.01)  &0.12(0.02)	&0.10(0.02)	 &0.02(0.01)    &0.12(0.01)	&0.08(0.02) \\
&                                 &PCA     &0.05(0.04)	&0.21(0.07)	&0.13(0.07)	 &0.04(0.03) 	 &0.20(0.07)	&0.11(0.07) \\
\cmidrule(lr){4-9}
&$\alpha$-stable                  &RTS     &0.14(0.14)    &0.30(0.10)	&0.30(0.13)	 &0.12(0.11)&0.29(0.09)	&0.27(0.13) \\
&                                 &PCA     &0.35(0.78)	&0.46(0.18)	&0.43(0.20)	 &0.43(0.86)	 &0.48(0.19)	&0.43(0.21) \\
\midrule[1pt]
&\multirow{2}{*}{Type}&\multirow{2}{*}{Method}  &\multicolumn{3}{c}{$(p,n)=(250,150)$}&\multicolumn{3}{c}{$(p,n)=(250,200)$} \cr
\cmidrule(lr){4-6} \cmidrule(lr){7-9}
&&                   &$\text{MEE\_CC}$     &$\text{AVE\_FL}$      &$\text{AVE\_FS}$      &$\text{MEE\_CC}$      &$\text{AVE\_FL}$      &$\text{AVE\_FS}$  \\
\midrule[1pt]
&$\mathcal{N}(\zero,\Ib_{p+m})$  &RTS    &0.01(0.00)	&0.09(0.01)	&0.07(0.01)	&0.01(0.00)	 &0.08(0.01)	&0.07(0.01) 	\\
&                                 &PCA   &0.01(0.00)	&0.09(0.01)	&0.07(0.01)	&0.01(0.00)	 &0.07(0.01)	&0.07(0.01)	  \\
\cmidrule(lr){4-9}
&$t_{3}(\zero,\Ib_{p+m})$        &RTS    &0.01(0.00)	&0.10(0.01)	&0.07(0.01)	&0.01(0.00)	 &0.09(0.01)	&0.07(0.01)  	\\
&                                 &PCA   &0.03(0.02)	&0.17(0.06)	&0.09(0.05)	&0.03(0.02)	 &0.15(0.06)	&0.09(0.04)	\\
\cmidrule(lr){4-9}
&$t_{2}(\zero,\Ib_{p+m})$        &RTS    &0.02(0.01)    &0.11(0.01)	&0.08(0.03)	 &0.01(0.00) &0.10(0.01)	&0.08(0.03)   \\
&                                 &PCA   &0.07(0.10)	&0.28(0.13)	&0.17(0.14)	 &0.07(0.09) &0.27(0.13)	&0.16(0.14)     \\
\cmidrule(lr){4-9}
&$t_{1}(\zero,\Ib_{p+m})$         &RTS   &0.04(0.04)	    &0.19(0.09)	&0.19(0.15)	 &0.02(0.02)    &0.14(0.04)	&0.17(0.14)  \\
&                                 &PCA   &0.27(0.31) &0.55(0.13)	&0.43(0.18)	 &0.27(0.28) &0.54(0.13)	&0.42(0.18)  \\
\cmidrule(lr){4-9}
&Skewed $t_{3}$                     &RTS     &0.01(0.00)  &0.10(0.01)	&0.07(0.02)	 &0.01(0.00)    &0.09(0.01)	&0.07(0.01) \\
&                                 &PCA     &0.03(0.02)	&0.17(0.06)	&0.10(0.04)	 &0.03(0.02) 	 &0.16(0.06)	&0.09(0.05) \\
\cmidrule(lr){4-9}
&$\alpha$-stable                  &RTS     &0.08(0.06)    &0.22(0.04)	&0.23(0.10)	 &0.06(0.03) &0.17(0.02) &0.20(0.08) \\
&                                 &PCA     &0.41(0.88)	&0.46(0.21)	&0.43(0.22)	 & 0.33(0.86) 	 &0.43(0.22)	&0.42(0.23) \\
\bottomrule[2pt]
  \end{tabular*}}
  \end{threeparttable}
\end{table}

In \textbf {Scenario A}, the setting perfectly fits to our assumption with no serial correlations of idiosyncratic errors and $(\bbf_t^\top,\bv_t^\top)^\top$ are  from light-tailed  Gaussian $\mathcal{N}(\zero,\Ib_{p+m})$ or heavy-tailed $t_{\nu}(\zero,\Ib_{p+m})$ with $\nu=3, 2, 1$.  Note that when $\nu=1$, it's indeed the Cauchy distribution which does not have finite moments of order greater than or equal to one. We also consider the skewed $t_3$ and $\alpha$-stable distributions to  gauge how sensitive the method is to the elliptical distribution assumption. We generate the multivariate skewed $t_3$ random samples from $\mathcal{ST}_{N+r}(\bxi={\bf 0},\bOmega=\Ib,\balpha={\bf 20},\nu=3)$ by function \texttt{rmvst} in \textsf{R} package \texttt{fMultivar}.
\textbf{Scenario B} is a simple case containing both serially and cross-sectionally correlated errors from Gaussian distribution, $t$ distribution with degree 1,2,3, skewed $t_3$ distribution or $\alpha$-stable distribution.  \textbf{Scenario C} corresponds to a  case where both serially and cross-sectionally correlated errors, and strong and weak factors exist.
To evaluate the empirical performance of different methods, we consider the following indices: the \textbf{ME}dian of the normalized estimation \textbf{E}rrors for   \textbf{C}ommon \textbf{C}omponents in terms of the matrix Frobenius norm, denoted as MEE-CC; the \textbf{AV}erage estimation \textbf{E}rror for the \textbf{F}actor \textbf{L}oading matrices, denoted as AVE-FL; and the \textbf{AV}erage estimation \textbf{E}rror for the \textbf{F}actor \textbf{S}croe matrices, denoted as AVE-FS. Specifically,
\[
\text{MEE-CC}=\text{median}\left\{\|\hat\Lb_r\hat{\Fb}_r^\top-\Lb\Fb^\top\|_F^2/\|\Lb\Fb^\top\|_F^2,r=1,\ldots,R\right\},
\]
\[
\text{AVE-FL}=\frac{1}{R}\sum_{r=1}^R \cD(\hat \Lb_r,\Lb), \ \ \text{AVE-FS}=\frac{1}{R}\sum_{r=1}^R \cD(\hat \Fb_r,\Fb),
\]
where $R$ is the replication times, $\hat \Lb_r$ and $\hat \Fb_r$ are the estimators for the $r$th replication, and for {two orthogonal matrices} $\Ob_1$ and $\Ob_2$ of sizes $p\times q_1$ and $p\times q_2$,
\[
\cD(\Ob_1,\Ob_2)=\bigg(1-\frac{1}{\max{(q_1,q_2)}}\text{Tr}\Big(\Ob_1\Ob_1^\top\Ob_2\Ob_2^\top\Big)\bigg)^{1/2}.
\]

\begin{table}[h]
  \caption{Simulation results for Scenario C, the values in the parentheses are the interquartile ranges for MEE-CC and standard deviations for AVE-FL and AVE-FS.}
  \label{tab:3}
  \renewcommand{\arraystretch}{1.4}
  \centering
  \selectfont
  \begin{threeparttable}
    \scalebox{0.9}{
    	 \begin{tabular*}{17cm}{ccccccccccccccccccccccccccccc}
\toprule[2pt]
&\multirow{2}{*}{SNR}  &\multirow{2}{*}{Type}&\multirow{2}{*}{Method}  &\multicolumn{3}{c}{$(p,n)=(250,100)$}&\multicolumn{3}{c}{$(p,n)=(250,150)$} \cr
\cmidrule(lr){5-7} \cmidrule(lr){8-10}
&&&                 &$\text{MEE\_CC}$     &$\text{AVE\_FL}$      &$\text{AVE\_FS}$      &$\text{MEE\_CC}$      &$\text{AVE\_FL}$      &$\text{AVE\_FS}$  \\
\midrule[1pt]
&\multirow{4}{*}{0.4}
&\multirow{2}{*}{$\mathcal{N}(\zero,\Ib_{p+m})$} &RTS   &0.02(0.01)	&0.14(0.02)	&0.09(0.02)	 &0.02(0.01)	 &0.11(0.01)	&0.09(0.01)  \\
&&                                               &PCA   &0.02(0.01)	&0.13(0.02)	&0.09(0.01)	 &0.02(0.01)	 &0.11(0.01)	&0.09(0..01)  \\
\cmidrule(lr){5-10}
&&\multirow{2}{*}{$t_{3}(\zero,\Ib_{p+m})$}      &RTS   &0.03(0.01)	&0.16(0.03)	&0.11(0.03)	 &0.02(0.01)	 &0.13(0.02)	&0.10(0.02)   \\
&&                                               &PCA   &0.05(0.06)	&0.26(0.10)	&0.16(0.10)	 &0.04(0.04)	 &0.23(0.10)	&0.15(0.10)   \\
\midrule[1pt]
&\multirow{4}{*}{0.5}
&\multirow{2}{*}{$\mathcal{N}(\zero,\Ib_{p+m})$} &RTS   &0.02(0.01)	&0.13(0.02)	&0.08(0.01)	 &0.02(0.01)	 &0.10(0.01)	&0.08(0.01)  \\
&&                                               &PCA   &0.02(0.01)	&0.13(0.02)	&0.08(0.01)	 &0.01(0.01)	 &0.10(0.01)	&0.08(0.01)  \\
\cmidrule(lr){5-10}
&&\multirow{2}{*}{$t_{3}(\zero,\Ib_{p+m})$}      &RTS   &0.03(0.01)	&0.15(0.02)	&0.10(0.03)	 &0.02(0.01)	 &0.12(0.01)	&0.09(0.02) \\
&&                                               &PCA   &0.05(0.04)	&0.23(0.09)	&0.14(0.09)	 &0.04(0.03)	 &0.21(0.09)	&0.13(0.08)  \\
\midrule[1pt]
&\multirow{4}{*}{0.6}
&\multirow{2}{*}{$\mathcal{N}(\zero,\Ib_{p+m})$} &RTS   &0.02(0.00)	&0.12(0.01)	&0.08(0.01)	 &0.01(0.01)	 &0.10(0.01)	&0.08(0.01)  \\
&&                                               &PCA   &0.02(0.00)	&0.12(0.01)	&0.08(0.01)	 &0.01(0.01)	 &0.10(0.01)	&0.08(0.01)  \\
\cmidrule(lr){5-10}
&&\multirow{2}{*}{$t_{3}(\zero,\Ib_{p+m})$}      &RTS   &0.02(0.01)	&0.14(0.02)	&0.09(0.03)	 &0.02(0.01)	 &0.11(0.01)	&0.08(0.02)  \\
&&                                               &PCA   &0.05(0.04)	&0.22(0.08)	&0.12(0.08)	 &0.03(0.03)	 &0.19(0.08)	&0.11(0.07)  \\
\midrule[1pt]
&\multirow{4}{*}{0.7}
&\multirow{2}{*}{$\mathcal{N}(\zero,\Ib_{p+m})$} &RTS   &0.02(0.00)	&0.12(0.01)	&0.08(0.01)	 &0.01(0.01)	 &0.09(0.01)	&0.07(0.01)  \\
&&                                               &PCA   &0.02(0.00)	&0.11(0.01)	&0.08(0.01)	 &0.01(0.01)	 &0.09(0.01)	&0.07(0.01)  \\
\cmidrule(lr){5-10}
&&\multirow{2}{*}{$t_{3}(\zero,\Ib_{p+m})$}      &RTS   &0.02(0.01)	&0.14(0.02)	&0.08(0.02)	 &0.02(0.00)	 &0.11(0.01)	&0.08(0.01)  \\
&&                                               &PCA   &0.05(0.04)	&0.21(0.08)	&0.12(0.07)	 &0.03(0.02)	 &0.18(0.07)	&0.10(0.06)  \\
\bottomrule[2pt]
  \end{tabular*}
}
  \end{threeparttable}
\end{table}

The Gram-Schmidt orthonormal transformation  can be used when $\Ob_1$ and $\Ob_2$ are not column-orthogonal matrices. In fact, $\cD(\Ob_1,\Ob_2)$ measures the distance between the column spaces of  $\Ob_1$ and $\Ob_2$, and it is a quantity between 0 and 1. It is equal to 0 if the column spaces of $\Ob_1$ and $\Ob_2$  are the same and 1 if they are orthogonal. As the factor loading matrix and factor score matrix are not separately identifiable, $\cD(\cdot,\cdot)$ particularly suits to quantify   the accuracy of factor loading/score matrices estimation.

The simulation results for Scenario A,  Scenario B and  Scenario C are reported in Table \ref{tab:1}, Table \ref{tab:2} and Table \ref{tab:3}, respectively.
For Scenario A, from Table \ref{tab:1}, we can see that in Gaussian setting, PCA performs slightly better than the RTS in terms of $\text{MEE\_CC}$, $\text{AVE\_FL}$ and $\text{AVE\_FS}$ while the performances of the two methods are still comparable. In the heavy-tailed $t_{\nu}$ distribution settings with $\nu=1,2,3$, the RTS outperforms the PCA by a large margin,  in terms of $\text{MEE\_CC}$, $\text{AVE\_FL}$ and $\text{AVE\_FS}$, which indicates the robustness of the RTS procedure. The smaller the $\nu$ is, the more obvious the superiority of RTS over PCA is.
Besides, as the time dimension $n$ or the cross-section $p$ get larger, the RTS performs better. In the skewed-$t_3$ and $\alpha$-stable distribution settings, the RTS still outperforms the PCA and performs satisfactorily, which indicates the RTS is not sensitive to the symmetric elliptical distribution assumption.
For Scenario B, from Table \ref{tab:2}, we can draw similar conclusions as for Scenario A. The results show that the proposed RTS procedure is also robust to the heavy tails in the case where both serial and cross-sectional correlations exist. For Scenario C,  from Table \ref{tab:3}, we see that RTS still performs well when both strong and weak factors exist. It performs comparably with PCA for data from $\cN(\zero, \Db)$ while performs much better than PCA for heavy-tailed data from $t_3(\zero,\Db)$ distribution.
In addition, the performances of both the RTS and PCA methods become better as the time dimension $n$ gets larger (from 100 to 150). In a word, from the simulation study,  the proposed RTS procedure can be used as a safe replacement of the conventional PCA method in practice.

\section{Real Example: S$\&$P 100 Weekly Returns Panel}
	In this section, we apply the proposed method to study the weekly share returns of Standard \& Poor 100 component companies during the period from January 1st, 2018 to December 31st, 2019. Details of the data set are available upon request, including the symbol list and names of the corresponding companies. The raw data set is a ``105 weeks''$\times$``100 shares'' panel without missing values. We firstly calculate the sample auto-correlation functions, which indicates that no significant serial correlations exist for most of the weekly return series. We also performed the Augmented Dickey-Fuller tests and found that  all the series are stationary.
	
	We use the centralized log returns to do factor analysis. We leave the centralized log returns unscaled since volatilities of all assets are themselves very informative in portfolio allocation, risk management and derivatives pricing. As for the factor number, the ``eigenvalue-ratio'' method proposed by \cite{Ahn2013Eigenvalue} and its robust version proposed by \cite{yu2018robust} both lead to an estimate of just 1 common factor. Inspired by the Fama-French 3 factor model, we also consider $m=3$ in this example for comparison.

\begin{figure}[h]
	\centering
	\scalebox{0.8}{
		\begin{minipage}[!t]{0.48\linewidth}
			 \includegraphics[width=1\textwidth]{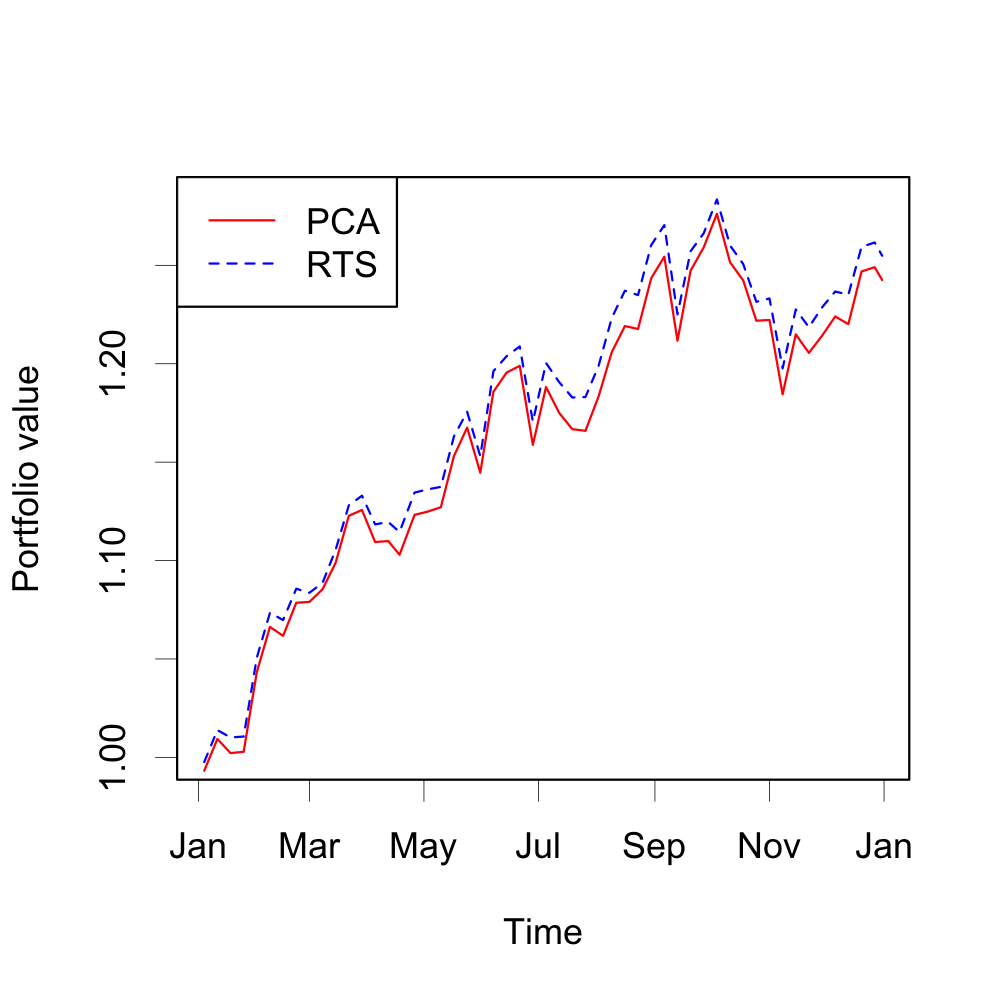}\\
		\end{minipage}
		\begin{minipage}[!t]{0.48\linewidth}
			 \includegraphics[width=1\textwidth]{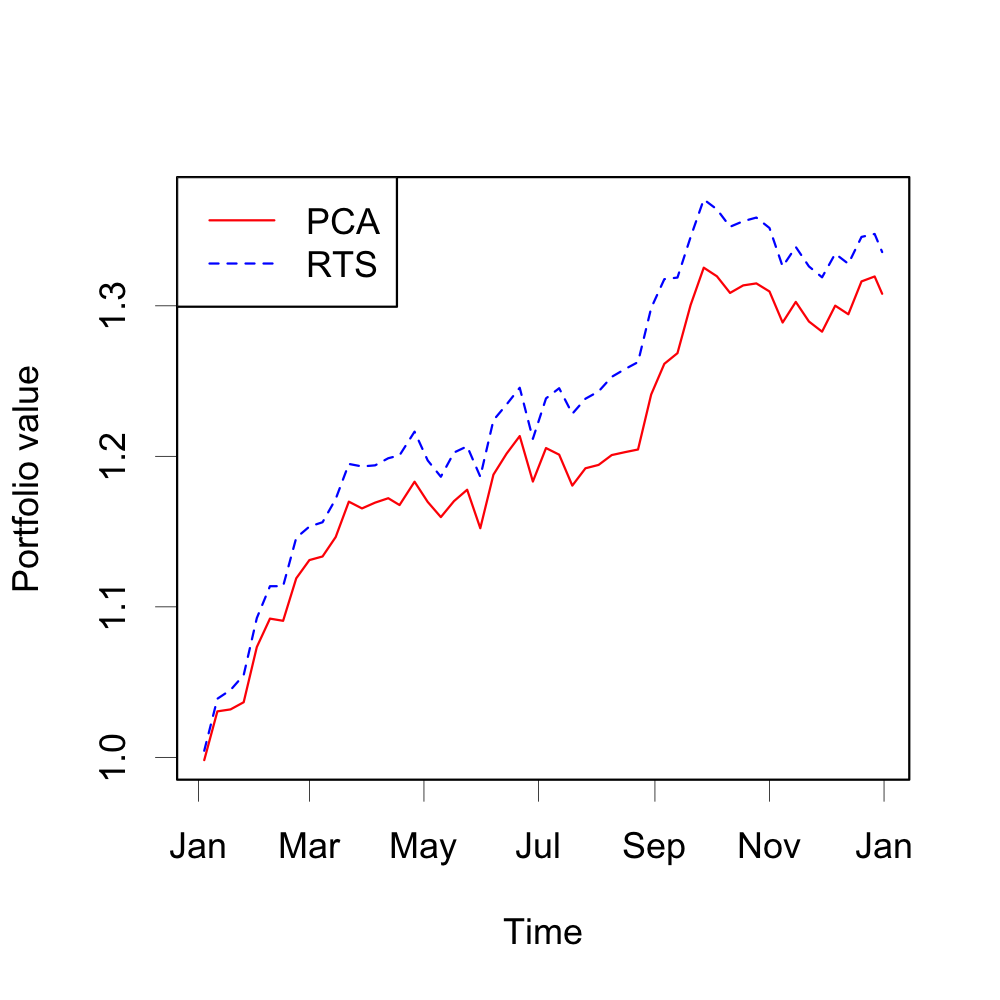}\\
	\end{minipage}}
	\caption{Net value curves of the portfolio when using PCA and RTS to estimate the factor model with number of factors $m=1$ (left) and $m=3$ (right).}\label{fig:2}
\end{figure}

	Firstly, we design a rolling scheme to evaluate the PCA and RTS methods based on their performances in the annual return by constructing risk-minimization portfolios. Given the scatter matrix $\bSigma$ of the share returns, a risk-minimization portfolio is given by
	\[
 \bw_{\text{opt}}=\bargmin_{{\bf 1}^\top\bw=1}\bw^\top\bSigma\bw=\frac{\bSigma^{-1}{\bf 1}}{{\bf 1}^\top\bSigma^{-1}{\bf 1}},
\]
where $\bw_{\text{opt}}$ determines the optimal weights on the shares.  Since the scatter matrix $\bSigma$ is unknown in practice, we use the factor model with PCA or RTS method to estimate it. In detail, at the beginning of each week $t$ in the year 2019, we recursively use the returns during the past 52 weeks ($52\times 100$ panel) to train  factor models either by PCA or RTS method. The estimated common components and idiosyncratic errors are recorded as $\hat\bchi_t$ and $\hat\Eb_t$, both of dimension $52\times 100$. Then, we ignore the cross-sectional correlations of the idiosyncratic errors and estimate the scatter matrix of the 100 variables at week $t$ by
	\[
	 \hat{\bSigma}_t=\frac{1}{52}\hat\bchi_t^\top\hat\bchi_t+\text{diag}\bigg(\frac{1}{52}\hat\Eb_t^\top\hat\Eb_t\bigg).
	\]
The portfolio weights $\hat\bw_t$ are calculated using $\hat\bSigma_t$ and the return of the portfolio at week $t$ is $\hat\bw_t^\top\bx_t$, where $\bx_t$ is the raw return vector at week $t$. Figure \ref{fig:2} shows the net value curves of this strategy during the year 2019 by ignoring transaction cost and liquidity risk. It can be seen that when the factor models are trained by RTS method, the annual return of this portfolio is higher than that by PCA, regardless of taking $m=1$ or $m=3$.

\begin{figure}[h]
		\centering
		\scalebox{0.8}{
		\begin{minipage}[!t]{0.48\linewidth}
			 \includegraphics[width=1\textwidth]{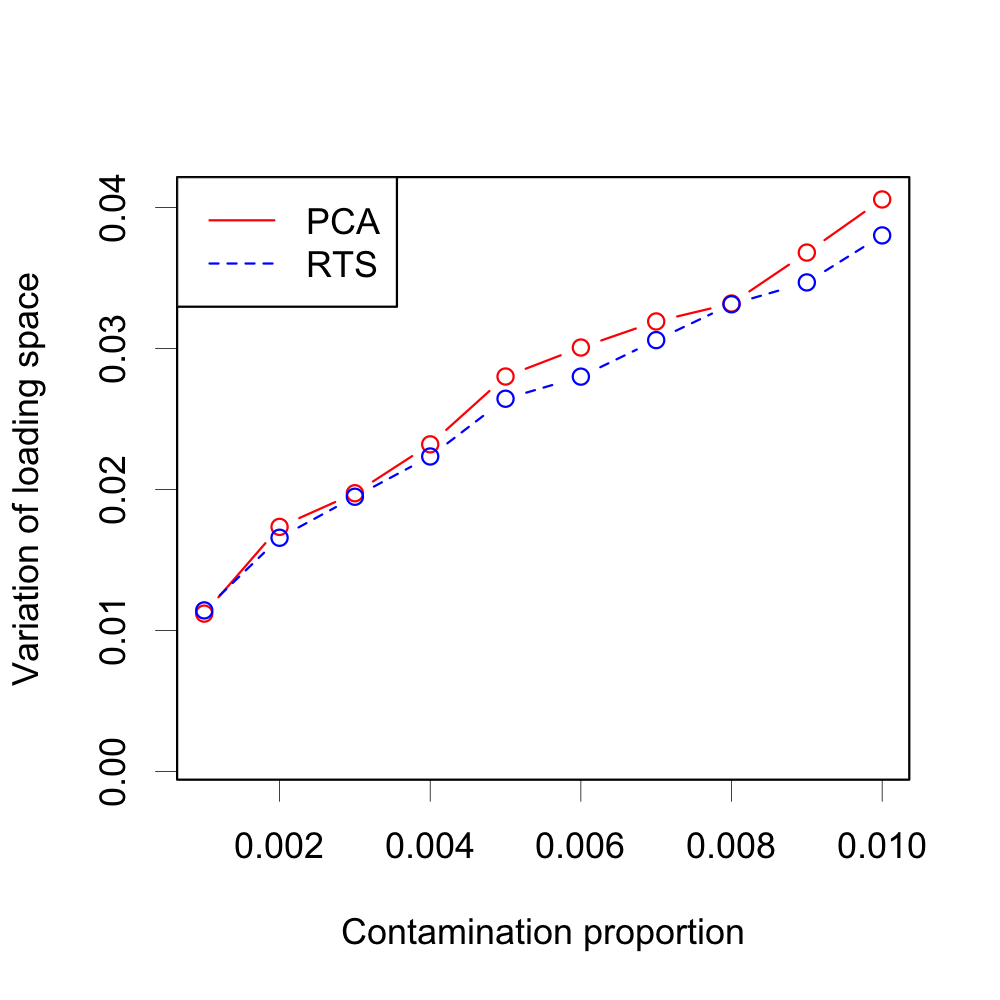}\\
		\end{minipage}
		\begin{minipage}[!t]{0.48\linewidth}
			 \includegraphics[width=1\textwidth]{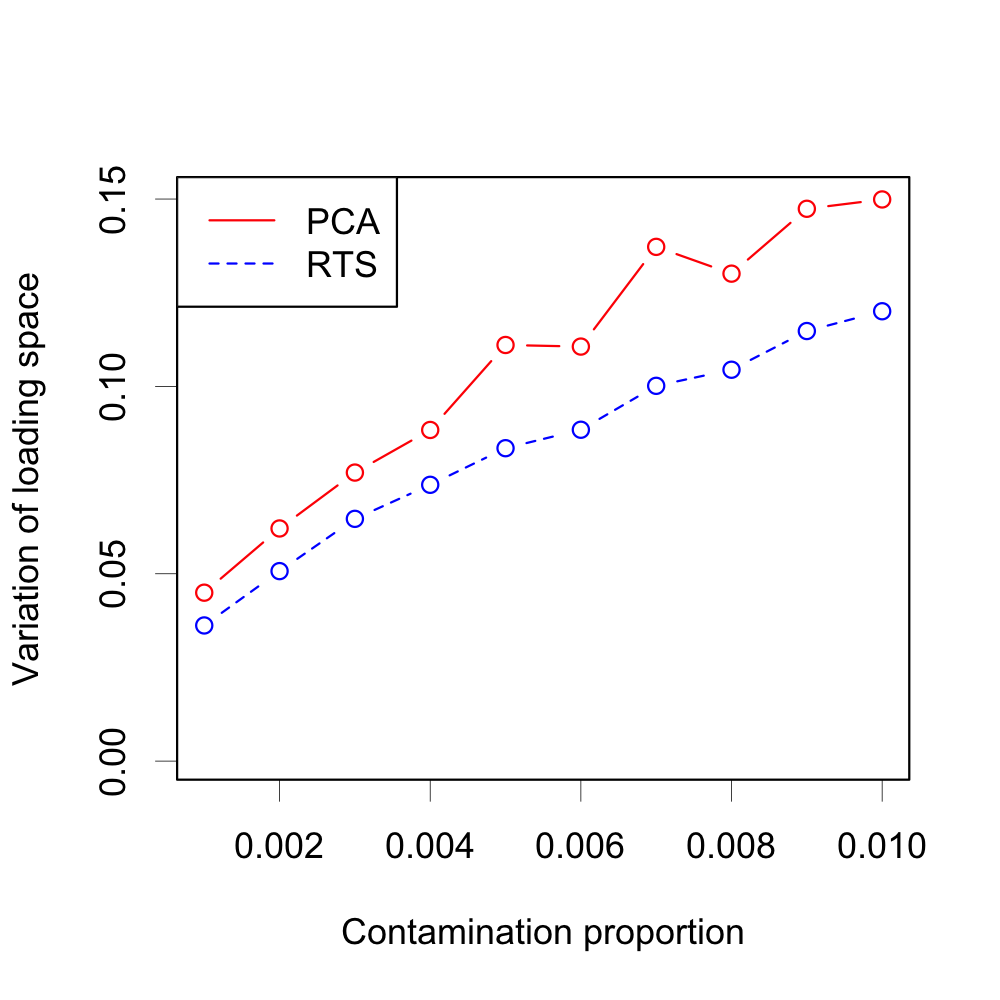}\\
		\end{minipage}}
		\caption{Average variation of estimated loading space by PCA and RTS methods in 100 replications with growing proportion of outliers when the factor number is $m=1$ (left) and $m=3$ (right).}\label{fig:3}
\end{figure}

We then further compare the RTS and PCA methods by their sensitivity to outliers in this real example.  The sensitivity is evaluated by the variation of estimated loading space $\mathcal{D}(\hat\Lb^{new},\hat\Lb^{old})$, if we randomly select a small proportion of the demeaned log returns in the $105\times 100$ panel and double their values. We repeat the procedure for 100 times to reduce randomness, and report the mean variation in Figure \ref{fig:3} with various contamination levels. It's seen that the estimated loading space can vary a lot with just a small number of outliers. When the contamination level grows, the discrepancy  between the loading spaces $\hat\Lb^{new},\hat\Lb^{old}$ becomes larger, and the phenomenon is more obvious in the case $m=3$ compared with $m=1$. However, the RTS method is less sensitive to outliers than the PCA method in both cases, which is expected as the RTS is more robust.

\section{Discussion}
We proposed a robust two-step estimation procedure for large-dimensional elliptical factor model. In the first step, we estimate the factor loadings by the leading eigenvectors of the spatial Kendall's tau matrix. In the second step, we resort to Ordinary Least Squares (OLS) regression to estimate the factor scores. We  prove the consistency of the proposed estimators for factor loadings, scores and common components.
Numerical studies show that the proposed procedure works comparably with the conventional PCA method when data are from Gaussian distribution while performs much better when data are heavy-tailed, which indicates that the proposed RTS procedure can be used as a safe replacement of the conventional PCA method.

In the future, we aim to propose a robust procedure for more general heavy-tailed data without the constraint of elliptical distribution. In fact, the elliptical assumption exerts  a shape constraint on the distribution of the factors and the idiosyncratic errors, which  may also constrain the real application. For more general case, we may consider the following optimization problem,
\begin{equation}\label{L1}
\bmin_{\Lb, \Fb}\left\{\sum_{i=1}^p\sum_{t=1}^n|y_{it}-\bl_i^\top\bbf_t|\right\},
\end{equation}
which is motivated by  the equivalence of PCA and double least square estimation. We simply replace the quadratic loss function by the absolute loss function in (\ref{L1}) to achieve robustness. We may minimize the absolute loss function alternatively over $\Lb$ and $\Fb$, each time optimizing one argument while keeping the other fixed. The theoretical analysis is more challenging and we leave this as a future work.

\vspace{2em}

\begin{center}
APPENDIX: PROOFS OF MAIN THEOREMS
\end{center}

\begin{appendices}

\section{Proofs of Main Theorems}

We first present three useful lemmas before we give the detailed proofs of main theorems. In the proofs, $c$ denotes some generic finite constant. We denote a random matrix of fixed dimensions as ${\bf o_p(1)}$ or ${\bf O_p(1)}$ when all of its entries are $o_p(1)$ or $O_p(1)$.

\begin{lemma}\label{lemma:lemmanormal}
Assume that $\bg=(g_1,\ldots,g_p)^\top\sim \cN(\zero,\Ib_p)$, then for any $i,j$, we have
	\[
	\mathbb{E}\frac{g_ig_j}{\|\bg\|^2}=0.
	\]
\end{lemma}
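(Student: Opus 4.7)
The claim is a symmetry statement about $\bU := \bg/\|\bg\|$, which is uniform on $S^{p-1}$ since $\bg$ is spherically distributed. Note first that as written the identity can only be correct for $i \neq j$: in the diagonal case $\sum_{k=1}^p g_k^2/\|\bg\|^2 = 1$ together with exchangeability of the coordinates forces $\mathbb{E}(g_i^2/\|\bg\|^2) = 1/p$. I therefore read the statement as a claim about off-diagonal indices, which is all that is needed to identify the eigenvector structure of the spatial Kendall's tau matrix in Lemma~\ref{lemma:lemma1}.

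The plan is a one-line odd-function argument based on sign-flip invariance of the standard Gaussian. Fix $i \neq j$ and let $T : \mathbb{R}^p \to \mathbb{R}^p$ be the reflection sending $g_i \mapsto -g_i$ and fixing all other coordinates. Since the components of $\bg$ are independent standard normals, $T\bg \stackrel{d}{=} \bg$. The denominator $\|\bg\|^2 = \sum_k g_k^2$ is invariant under $T$, whereas the numerator satisfies $(T\bg)_i (T\bg)_j = -g_i g_j$. Equating expectations along the two equally distributed random vectors gives
\[
\mathbb{E}\frac{g_i g_j}{\|\bg\|^2} \;=\; \mathbb{E}\frac{(T\bg)_i (T\bg)_j}{\|T\bg\|^2} \;=\; -\mathbb{E}\frac{g_i g_j}{\|\bg\|^2},
\]
which forces the expectation to vanish.

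An equivalent route, which I would mention only if a sharper statement were wanted, is to factor $\bg = R \bU$ with $R = \|\bg\|$ independent of $\bU \sim \mathrm{Unif}(S^{p-1})$, so that $g_i g_j / \|\bg\|^2 = U_i U_j$ and the expectation reduces to the $(i,j)$-entry of the covariance of the uniform measure on the sphere, which is well known to equal $\delta_{ij}/p$.

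There is essentially no obstacle. Integrability is immediate from $|g_i g_j|/\|\bg\|^2 \le 1/2$ (by AM-GM on $g_i^2$ and $g_j^2$), so Fubini and the change of variables under $T$ are rigorously justified; the only interpretive care needed is the restriction to $i \neq j$ noted above.
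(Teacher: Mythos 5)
Your proof is correct, and it is worth noting that you also catch a genuine (if minor) imprecision in the statement: the identity holds only for $i\neq j$, since exchangeability forces $\mathbb{E}\big(g_i^2/\|\bg\|^2\big)=1/p$ on the diagonal; the paper's own proof silently assumes $i\neq j$ by taking $i=1,j=2$. Your route differs from the paper's in the choice of symmetry. The paper rotates the $(g_1,g_2)$ plane by $45^\circ$, setting $\tilde g_1=(g_1+g_2)/\sqrt2$ and $\tilde g_2=(g_1-g_2)/\sqrt2$, so that $2g_1g_2=\tilde g_1^2-\tilde g_2^2$ while the denominator is unchanged, and then appeals to exchangeability of $\tilde g_1$ and $\tilde g_2$. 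You instead use the reflection $g_i\mapsto -g_i$, under which the Gaussian law and the denominator are invariant but the numerator changes sign, forcing the expectation to vanish. Both are one-line orthogonal-invariance arguments; yours is slightly more direct (no change of variables, no new coordinates to check are jointly Gaussian), and your explicit integrability bound $|g_ig_j|/\|\bg\|^2\le 1/2$ tidies up a point the paper leaves implicit. The spherical-coordinates remark you add recovers the full covariance $\delta_{ij}/p$ of the uniform measure on $S^{p-1}$, which is exactly what the paper's Lemma A.2 needs for the diagonal terms, so your observation is consistent with how the lemma is actually used downstream.
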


\begin{proof}
	Without loss of generality, we take $i=1,j=2$ for example. Define
	\[
	\tilde g_1=\frac{1}{\sqrt{2}}(g_1+g_2),\quad \tilde g_2=\frac{1}{\sqrt{2}}(g_1-g_2),
	\]
	then it's easy to verify that
	\[
	(\tilde g_1,\tilde g_2,g_3,\ldots,g_{p})\sim\mathcal{N}({\bf 0},\Ib_{p}),
	\]
	while by symmetry property,
	\[
		 2\mathbb{E}\frac{g_1g_2}{\|\bg\|^2}=\mathbb{E}\frac{\tilde g_1^2-\tilde g_2^2}{\tilde g_1^2+\tilde g_2^2+\sum_{i=3}^{p}g_i^2}=0.
	\]
\end{proof}

\begin{lemma}\label{lemma:2}
	Assume that $\bg=(g_1,\ldots,g_p)^\top\sim \cN(\zero,\Ib_p)$, then for any $q\times p$ deterministic matrix $\Ab$,
	\[
	 \mathbb{E}\frac{\|\Ab\bg\|^2}{\|\bg\|^2}=\frac{1}{p}\|\Ab\|_F^2.
	\]
\end{lemma}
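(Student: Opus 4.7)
The plan is to expand $\|A\bg\|^2$ as a quadratic form in $\bg$ and then reduce the computation to two elementary expectations: the off-diagonal case handled by Lemma \ref{lemma:lemmanormal} and a diagonal case handled by exchangeability of the coordinates.

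First I would write
\[
\|A\bg\|^2 = \bg^\top A^\top A \bg = \sum_{i=1}^p\sum_{j=1}^p (A^\top A)_{ij}\, g_i g_j,
\]
so that by linearity of expectation
\[
\EE\frac{\|A\bg\|^2}{\|\bg\|^2} = \sum_{i=1}^p\sum_{j=1}^p (A^\top A)_{ij}\, \EE\frac{g_i g_j}{\|\bg\|^2}.
\]
For $i\ne j$, Lemma \ref{lemma:lemmanormal} gives $\EE\{g_i g_j/\|\bg\|^2\} = 0$, so only the diagonal terms survive.

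For the diagonal terms I would use the exchangeability of the coordinates of $\bg$ under $\mathcal{N}(\zero,\Ib_p)$: the random variables $g_i^2/\|\bg\|^2$ for $i=1,\ldots,p$ all have the same distribution (equivalently, $\|\bg\|^{-2}(g_1^2,\dots,g_p^2)$ is Dirichlet-distributed and symmetric), so $\EE\{g_i^2/\|\bg\|^2\}$ is the same for every $i$. Summing gives $\sum_{i=1}^p \EE\{g_i^2/\|\bg\|^2\} = \EE\{\|\bg\|^2/\|\bg\|^2\} = 1$, hence $\EE\{g_i^2/\|\bg\|^2\} = 1/p$ for each $i$.

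Substituting back,
\[
\EE\frac{\|A\bg\|^2}{\|\bg\|^2} = \frac{1}{p}\sum_{i=1}^p (A^\top A)_{ii} = \frac{1}{p}\,\mathrm{Tr}(A^\top A) = \frac{1}{p}\|A\|_F^2,
\]
which is the desired identity. There is no real obstacle here; the only point requiring a small argument is the symmetry step that yields $1/p$, and that follows immediately from the rotational (in particular coordinate-permutation) invariance of the standard Gaussian law.
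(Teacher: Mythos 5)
Your proof is correct and follows essentially the same route as the paper's: expand the quadratic form, kill the cross terms via Lemma \ref{lemma:lemmanormal}, and use coordinate symmetry to get $\EE\{g_i^2/\|\bg\|^2\}=1/p$. The only cosmetic difference is that the paper first reduces to the case $q=1$ and sums over rows, whereas you work directly with $\mathrm{Tr}(A^\top A)$; the substance is identical.
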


\begin{proof}
	It's sufficient to prove that the lemma holds with $q=1$. Given a $p$-dimensional deterministic vector $\ba$, we have
	\[
	 \mathbb{E}\frac{(\ba^\top\bg)^2}{\|\bg\|^2}=\sum_{i=1}^{p}\mathbb{E}\frac{(a_ig_i)^2}{\|\bg\|^2}+\sum_{i\ne j}\mathbb{E}\frac{a_ig_i\times a_jg_j}{\|\bg\|^2}.
	\]
	By Lemma \ref{lemma:lemmanormal}, the second term is $0$. For any $i=1,\ldots,p$ we have
	\[
	 \mathbb{E}\frac{(a_ig_i)^2}{\|\bg\|^2}=a_i^2\mathbb{E}\frac{g_i^2}{\|\bg\|^2}=\frac{a_i^2}{p},
	\]
	which concludes the lemma.
\end{proof}

\begin{lemma}\label{LemmaA.1yu}
Under  \textbf{Assumptions A, B, C}, as $\min\{n,p\}\rightarrow\infty$ we have
\[
\left\{
	\begin{matrix}
	\lambda_j(\hat\Kb_y) \asymp m^{-1},&j\le m,\\
	\lambda_j(\hat\Kb_y)=o_p(1),&j>m.
	\end{matrix}
	\right.
\]
\end{lemma}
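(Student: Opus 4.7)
The plan decomposes the proof into two pieces: first compute the eigenvalues of the population spatial Kendall's tau matrix $\Kb_y$ via Lemma~\ref{lemma:lemma1}, then transfer the result to $\hat\Kb_y$ through a spectral-norm concentration inequality combined with Weyl's inequality.

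For the population computation, Assumption A gives $\by_t\sim ED(\zero,\bSigma_y,\zeta)$ with $\bSigma_y=\Lb\Lb^\top+\bSigma_\epsilon$. Assumptions B and C together with Weyl's theorem yield the spiked structure $\lambda_j(\bSigma_y)=p\mu_j(1+o(1))$ for $j\le m$ (writing $\mu_j=\lambda_j(\Vb)$) and $c_2\le\lambda_j(\bSigma_y)\le c_1$ for $j>m$. Plugging this into Lemma~\ref{lemma:lemma1} and dividing numerator and denominator by $p$,
\[
\lambda_j(\Kb_y)=\EE\bigg[\frac{(\lambda_j(\bSigma_y)/p)\,g_j^2}{\sum_{k\le m}(\lambda_k(\bSigma_y)/p)g_k^2+p^{-1}\sum_{k>m}\lambda_k(\bSigma_y)g_k^2}\bigg].
\]
The first denominator sum converges to $\sum_{k\le m}\mu_k g_k^2$ almost surely, and Gaussian concentration together with $p^{-1}\sum_{k>m}\lambda_k(\bSigma_y)\in[c_2,1]+o(1)$ keeps the second sum bounded and bounded away from zero. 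Dominated convergence (the integrand is trivially bounded by $1$) then yields, for $j\le m$ and along any subsequence with $p^{-1}\mathrm{Tr}(\bSigma_\epsilon)\to\tau$, the limit $\EE[\mu_j g_j^2/(\sum_{k\le m}\mu_k g_k^2+\tau)]$, a positive constant of order $m^{-1}$; hence $\lambda_j(\Kb_y)\asymp m^{-1}$. For $j>m$, the symmetry identity $\EE[g_j^2/\sum_k g_k^2]=p^{-1}$ combined with the lower bound $\sum_k\lambda_k(\bSigma_y)g_k^2\ge c_2\sum_kg_k^2$ gives $\lambda_j(\Kb_y)\le (c_1/c_2)\cdot p^{-1}=O(p^{-1})=o(1)$.

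For the sample stage I would show $\|\hat\Kb_y-\Kb_y\|=o_p(1)$. The kernel of this second-order matrix-valued U-statistic is a rank-one orthogonal projection, hence every summand is bounded in spectral norm by one. Decomposing the U-statistic through its Hoeffding/H\'ajek projection into an average of $n$ independent bounded matrices plus a mean-zero completely degenerate remainder, and applying the matrix Bernstein inequality in the form developed in \cite{han2018eca} to each piece, yields $\|\hat\Kb_y-\Kb_y\|=O_p(\sqrt{\log p/n})=o_p(1)$ whenever $\min\{n,p\}\to\infty$ with $\log p=o(n)$. Weyl's inequality $|\lambda_j(\hat\Kb_y)-\lambda_j(\Kb_y)|\le\|\hat\Kb_y-\Kb_y\|$ combined with the population bounds above then delivers both claims simultaneously: $\lambda_j(\hat\Kb_y)\asymp m^{-1}$ for $j\le m$ and $\lambda_j(\hat\Kb_y)=o_p(1)$ for $j>m$.

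The main obstacle is the spectral-norm concentration. Because each $\by_t$ participates in $n-1$ pairs, the U-statistic summands are strongly dependent and a black-box matrix Bernstein bound for independent sums does not apply; passing through the Hoeffding/H\'ajek decomposition is the standard remedy, but the completely degenerate remainder requires its own matrix-valued degenerate U-statistic concentration argument. A useful structural feature is that Lemma~\ref{lemma:lemma1} depends only on $\bg$ and not at all on $\zeta$, so no moment assumption on $\zeta$ enters the concentration step—precisely what the moment-free framework of the paper requires.
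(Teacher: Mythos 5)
The paper offers no proof of this lemma at all: it simply cites Lemma 3.1 and Lemma A.1 of \cite{yu2018robust}. Your population-level half is essentially correct and matches what that reference does: Assumptions B and C plus Weyl's theorem give the spiked structure of $\bSigma_y$, and feeding it into Lemma \ref{lemma:lemma1} yields $\lambda_j(\Kb_y)\asymp m^{-1}$ for $j\le m$ (with $m$ fixed this is just boundedness away from $0$ and $\infty$) and $\lambda_j(\Kb_y)\le (c_1/c_2)p^{-1}$ for $j>m$, exactly as you argue.

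The genuine gap is in the sample stage. You reduce everything to $\|\hat\Kb_y-\Kb_y\|=o_p(1)$, propose a Hoeffding/H\'ajek projection into a linear part plus a completely degenerate remainder, and then note yourself that the degenerate remainder ``requires its own matrix-valued degenerate U-statistic concentration argument'' --- but that is precisely the hard step, and you do not supply it. In addition, the route you choose is stronger than needed and costs an extra hypothesis: a matrix-Bernstein bound of order $\sqrt{\log p/n}$ only delivers the conclusion when $\log p=o(n)$, a condition the lemma does not impose (it assumes only $\min\{n,p\}\rightarrow\infty$). The machinery this paper actually deploys (in Lemmas S2--S4 of the supplement, following \cite{yu2018robust}) sidesteps both issues: the U-statistic is handled through Hoeffding's average-over-permutations representation, $\frac{1}{n!}\sum_{\sigma}$ of sums of $\lfloor n/2\rfloor$ i.i.d.\ blocks, so no degenerate component ever arises; and full spectral-norm concentration of the $p\times p$ matrix is never needed. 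Instead one uses the exact decomposition $\hat\Kb_y=\Lb\Mb_1\Lb^\top+\Mb_2\Lb^\top+\Lb\Mb_3+\Mb_4$ from the proof of Theorem \ref{theorem:1}: second-moment (Frobenius-norm) bounds on the low-dimensional projections give $\|\Mb_2\Lb^\top\|=o_p(1)$, $\|\Lb\Mb_3\|=o_p(1)$ and $\|\Mb_4\|\le\|\Mb_4\|_F=o_p(1)$, so Weyl's inequality applied around the rank-$m$ matrix $\Lb\Mb_1\Lb^\top$ yields $\lambda_j(\hat\Kb_y)=o_p(1)$ for $j>m$, while a lower bound on $\lambda_m(p\Mb_1)$ (an $m\times m$ matrix, hence free of dimensional log factors) yields $\lambda_j(\hat\Kb_y)\asymp m^{-1}$ for $j\le m$, all under $\min\{n,p\}\rightarrow\infty$ alone.
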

\begin{proof}
It is adapted from Lemma 3.1 and  Lemma A.1  in \cite{yu2018robust}, so we omit the proof here.
\end{proof}

\section*{Proof of Theorem \ref{theorem:1}}
\begin{proof}
	 Define $\hat\bLambda$ as the diagonal matrix composed of the leading $m$ eigenvalues of $\hat\Kb_y$. Lemma \ref{LemmaA.1yu} implies that $\hat\bLambda$ is asymptotically invertible, $\|\hat\bLambda\|_F=O_p(1)$ and $\|\hat\bLambda^{-1}\|_F=O_p(1)$. Because  $\hat\Lb=\sqrt{p}\hat\bGamma$ and $\hat\bGamma$ is composed of the leading eigenvectors of $\hat\Kb_y$, we have
\[
\hat\Kb_y\hat\Lb=\hat\Lb\hat\bLambda.
\]
Expand $\hat\Kb_y$ by its definition, then
\[
\begin{split}
\hat\Lb\hat\bLambda=&\frac{2}{n(n-1)}\sum_{1\le t<s\le n}\frac{(\by_t-\by_s)(\by_t-\by_s)^\top}{\|\by_t-\by_s\|^2}\hat\Lb\\
=&\frac{2}{n(n-1)}\sum_{1\le t<s\le n}\frac{\big[\Lb(\bbf_t-\bbf_s)+(\bepsilon_t-\bepsilon_s)\big]\big[\Lb(\bbf_t-\bbf_s)+(\bepsilon_t-\bepsilon_s)\big] ^\top}{\|\by_t-\by_s\|^2}\hat\Lb\\
=&\frac{2}{n(n-1)}\sum_{1\le t<s\le n}\frac{\Lb(\bbf_t-\bbf_s)(\bbf_t-\bbf_s)^\top\Lb^\top}{\|\by_t-\by_s\|^2}\hat\Lb+\frac{2}{n(n-1)}\sum_{1\le t<s\le n}\frac{(\bepsilon_t-\bepsilon_s)(\bbf_t-\bbf_s)^\top\Lb^\top}{\|\by_t-\by_s\|^2}\hat\Lb\\
&+\frac{2}{n(n-1)}\sum_{1\le t<s\le n}\frac{\Lb(\bbf_t-\bbf_s)(\bepsilon_t-\bepsilon_s)^\top}{\|\by_t-\by_s\|^2}\hat\Lb+\frac{2}{n(n-1)}\sum_{1\le t<s\le n}\frac{(\bepsilon_t-\bepsilon_s)(\bepsilon_t-\bepsilon_s)^\top}{\|\by_t-\by_s\|^2}\hat\Lb.
\end{split}
\]
For the ease of notations, we denote
\[
\begin{split}
\Mb_1=&\frac{2}{n(n-1)}\sum_{1\le t<s\le n}\frac{(\bbf_t-\bbf_s)(\bbf_t-\bbf_s)^\top}{\|\by_t-\by_s\|^2},\quad\Mb_2=\frac{2}{n(n-1)}\sum_{1\le t<s\le n}\frac{(\bepsilon_t-\bepsilon_s)(\bbf_t-\bbf_s)^\top}{\|\by_t-\by_s\|^2},\\
\Mb_3=&\frac{2}{n(n-1)}\sum_{1\le t<s\le n}\frac{(\bbf_t-\bbf_s)(\bepsilon_t-\bepsilon_s)^\top}{\|\by_t-\by_s\|^2},\quad\Mb_4=\frac{2}{n(n-1)}\sum_{1\le t<s\le n}\frac{(\bepsilon_t-\bepsilon_s)(\bepsilon_t-\bepsilon_s)^\top}{\|\by_t-\by_s\|^2},
\end{split}
\]
and let $\hat\Hb=\Mb_1\Lb^\top\hat\Lb\hat\bLambda^{-1}$, then
\begin{equation}\label{equa1}
\hat\Lb-\Lb\hat\Hb=(\Mb_2\Lb^\top\hat\Lb+\Lb\Mb_3\hat\Lb+\Mb_4\hat\Lb)\hat\bLambda^{-1}.
\end{equation}

\textbf{Lemma S2 } to \textbf{Lemma S4} in the online supplementary  material show that
\[
\|\Mb_2\|_F^2=O_p\bigg(\frac{1}{np}+\frac{1}{p^3}\bigg), \quad 	 \|\Mb_3\|_F^2=O_p\bigg(\frac{1}{np}+\frac{1}{p^3}\bigg),
\]
while
\[
\frac{1}{p}\|\Mb_4\hat\Lb\|_F^2=O_p\bigg(\frac{1}{np}+\frac{1}{p^2}\bigg)+o_p(1)\times \frac{1}{p}\|\hat\Lb-\Lb\hat\Hb\|_F^2.
\]
Therefore, by Cauchy-Schwartz inequality and triangular inequality, it's easy to prove that
\[
\frac{1}{p}\|\hat\Lb-\Lb\hat\Hb\|_F^2=O_p\bigg(\frac{1}{n}+\frac{1}{p^2}\bigg)+o_p(1)\times \frac{1}{p}\|\hat\Lb-\Lb\hat\Hb\|_F^2,
\]
and the convergence rate for $\hat\Lb$ follows directly. To complete the proof, it remains to show $\hat\Hb^\top\Vb\hat\Hb\overset{p}{\rightarrow}\Ib_m$. By Cauchy-Schwartz inequality,
\[
\bigg\|\frac{1}{p}\Lb^\top(\hat\Lb-\Lb\hat\Hb)\bigg\|_F\le \sqrt{\frac{\|\Lb\|_F^2}{p}\frac{\|\hat\Lb-\Lb\hat\Hb\|_F^2}{p}}=o_p(1),\quad \bigg\|\frac{1}{p}\hat\Lb^\top(\hat\Lb-\Lb\hat\Hb)\bigg\|_F=o_p(1).
\]
Note that $p^{-1}\hat\Lb^\top\hat\Lb=\Ib_m$ and $p^{-1}\Lb^\top\Lb\rightarrow \Vb$,  thus it holds that
\[
\frac{1}{p}\Lb^\top\hat\Lb=\Vb\hat\Hb+{\bf o_p(1)},\quad \Ib_m=\frac{1}{p}\hat\Lb^\top\Lb\hat\Hb+{\bf o_p(1)},
\]
which further implies $\hat\Hb^\top\Vb\hat\Hb\overset{p}{\rightarrow}\Ib_m$, and concludes the theorem.
\end{proof}

\section*{Proof of Theorem \ref{theorem:2}}
\begin{proof}
	Because $\hat\Hb^\top\Vb\hat\Hb\overset{p}{\rightarrow}\Ib_m$, we have $\|\hat\Hb\|_F=O_p(1)$ and $\hat\Hb$ is invertible with probability approaching to 1. By our robust estimation procedure,
	\[
	 \hat\bbf_t=\frac{1}{p}\hat\Lb^\top\by_t=\frac{1}{p}\hat\Lb^\top(\Lb\bbf_t+\bepsilon_t)=\frac{1}{p}\hat\Lb^\top\bigg(\hat\Lb\hat\Hb^{-1}-(\hat\Lb\hat\Hb^{-1}-\Lb)\bigg)\bbf_t+\frac{1}{p}\hat\Lb^\top\bepsilon_t.
	\]
	Note that $p^{-1}\hat\Lb^\top\hat\Lb=\Ib_m$, then
	\[
	 \hat\bbf_t-\hat\Hb^{-1}\bbf_t=\frac{1}{p}\hat\Lb^\top(\hat\Lb-\Lb\hat\Hb)\hat\Hb^{-1}\bbf_t+\frac{1}{p}(\hat\Lb-\Lb\hat\Hb)^\top\bepsilon_t+\frac{1}{p}\hat\Hb^\top\Lb^\top\bepsilon_t.
	\]
	
	\textbf{Lemma S5} and \textbf{Lemma S6} in our online supplementary material show that
	\[
	 \bigg\|\frac{1}{p}\hat\Lb^\top(\hat\Lb-\Lb\hat\Hb)\bigg\|_F^2=O_p\bigg(\frac{1}{n^2}+\frac{1}{p^2}\bigg),\quad \bigg\|\frac{1}{p}(\hat\Lb-\Lb\hat\Hb)^\top\bepsilon_t\bigg\|_F^2=O_p\bigg(\frac{1}{n^2}+\frac{1}{p^2}\bigg).
	\]
Meanwhile, for any $t\le n$, by \textbf{Assumption A} and Lemma \ref{lemma:2} we have
\[
\|\bbf_t\|^2=\bigg(\frac{\zeta_t}{\sqrt{p}}\bigg)^2\frac{\|\sqrt{p}(\Ib_m,{\bf 0})\bg_t\|^2}{\|\bg_t\|^2}=O_p(1).
\]
Similarly, it's not hard to prove that for any $t\le n$,
\[
\bigg\|\frac{1}{p}\Lb^\top\bepsilon_t\bigg\|_F^2=\frac{1}{p^2}\bigg(\frac{\zeta_t}{\sqrt{p}}\bigg)^2\frac{\|\sqrt{p}\Lb^\top({\bf 0},\Ab)\bg_t\|^2}{\|\bg_t\|^2}=O_p\bigg(\frac{\|\Lb^\top\Ab\|_F^2}{p^2}\bigg)=O_p\Big(\frac{1}{p}\Big).
\]
Hence, by Cauchy-Schwartz inequality and  triangular inequality, we have for any $t\le n$,
\[
\|\hat\Hb\hat\bbf_t-\bbf_t\|^2=O_p(n^{-2}+p^{-1}),
\]
which concludes the theorem.

	\end{proof}

\section*{Proof of Theorem \ref{theorem:3}}
\begin{proof}
	By Theorem \ref{theorem:1} and Theorem \ref{theorem:2}, we already have
	\[
	 \frac{1}{p}\|\hat\Lb-\Lb\hat\Hb\|_F^2=O_p\Big(\frac{1}{n}+\frac{1}{p^2}\Big),\text{ and } \|\hat\Hb\hat\bbf_t-\bbf_t\|^2=O_p\Big(\frac{1}{p}+\frac{1}{n^2}\Big) \text{ for any $t\le n$}.
	\]
	 Hence, by triangular inequality and Cauchy-Schwartz inequality, we have for any $t\le n$,
	\[
	\begin{split}
	 \frac{1}{p}\|\hat\Lb\hat\bbf_t^\top-\Lb\bbf_t^\top\|^2=&	 \frac{1}{p}\|\hat\Lb\hat\Hb^{-1}\hat\Hb\hat\bbf_t-\hat\Lb\hat\Hb^{-1}\bbf_t+\hat\Lb\hat\Hb^{-1}\bbf_t-\Lb\bbf_t\|^2\\
	\le &\Big(\frac{2}{p}\|\hat\Lb\hat\Hb^{-1}\|_F^2\Big)\Big(\|\hat\Hb\hat\bbf_t-\bbf_t^\top\|^2\Big)+
\Big(\frac{2}{p}\|\hat\Lb\hat\Hb^{-1}-\Lb\|_F^2\Big)\|\bbf_t\|^2\\
	=&O_p\Big(\frac{1}{{n}}+\frac{1}{p }\Big),
	\end{split}
	\]
which concludes the theorem.
\end{proof}

\end{appendices}







\bibliographystyle{model2-names}

\bibliography{Ref}

\newpage

\title{\LARGE Supplementary Material for ``Large-dimensional Factor Analysis without Moment Constraints" }

\vspace{2em}

In the supplementary material, we give some useful lemmas and the corresponding detailed proofs. \textbf{Lemma S1} provides some technical error bounds which facilitate  presentation of the following proofs.    \textbf{Lemma S2 } to \textbf{Lemma S6 } are critical to the proof of the main theorems and are mentioned in the main paper. We let $\Xb$  and $\Zb$ denote  some generic random matrices in our proof and can be nonidentical in different lemmas. For two series $a_n$ and $b_n$, $a_n\lesssim b_n$ means that $a_n=O(b_n)$ as $n\rightarrow\infty$ (or $a_n=O_p(b_n)$ for random series).

	\vspace{1em}
\noindent\textbf{Lemma S1.}\label{lems1}
		Under \textbf{Assumptions A, B, C}, as $p\rightarrow\infty$ we have
		\[
		 \|\Ib_m-\Lb^\top\bSigma_y^{-1}\Lb\|_F^2=O(p^{-2}),\quad \|\Lb^\top\bSigma_y^{-2}\Lb\|_F^2=O(p^{-2}),
		\]
		where $\Lb$ is the loading matrix and $\bSigma_y=\Lb\Lb^\top+\bSigma_{\epsilon}$ is the scatter matrix of $\by_t$.
	\begin{proof}
		Because $\bSigma_{\epsilon}$ is positive definite, we have $\bSigma_y$ is always positive definite and invertible. Denote the Singular Value Decomposition (SVD) of $\Lb$ as $\Lb=\Pb\Db\Qb$. $\Pb$ is $p\times p$ orthogonal matrix. The upper $m\times m$ sub-matrix of $\Db$ ($p\times m$) is  diagonal and denoted as $\Db_1$, while the left are 0. $\Qb$ is $m\times m$ orthogonal matrix. Hence, the diagonal entries of $\Db_1$ are of order $\sqrt{p}$ and
\[
\begin{split}
\Ib_m-\Lb^\top\bSigma_y^{-1}\Lb=&\Qb^\top\Qb-\Qb^\top\Db^\top\Pb^\top(\Pb\Db\Db^\top\Pb^\top+\bSigma_{\epsilon})^{-1}\Pb\Db\Qb\\
=&\Qb^\top\bigg\{\Ib_m-\Db^\top(\Db\Db^\top+\Pb^\top\bSigma_{\epsilon}\Pb)^{-1}\Db\bigg\}\Qb.
\end{split}
\]
Partitioning $\Pb^\top\bSigma_{\epsilon}\Pb$ into $\left(\begin{matrix}
\Ab_1&\Ab_2\\
\Ab_3&\Ab_4
\end{matrix}\right)$ where $\Ab_1$ is $m\times m$ submatrix, then $\Ab_4$ is invertible, $\|\Ab_i\|\le c$ for $i=1,2,3,4$ and
\[
\begin{split}
\Db^\top(\Db\Db^\top+\Pb^\top\bSigma_{\epsilon}\Pb)^{-1}\Db=&\Db_1(\Db_1\Db_1^\top+\Ab_1-\Ab_2\Ab_4^{-1}\Ab_3)^{-1}\Db_1\\
=&\bigg(\Ib_m+\Db_1^{-1}(\Ab_1-\Ab_2\Ab_4^{-1}\Ab_3)\Db_1^{-1}\bigg)^{-1}.
\end{split}
\]
By Cauchy-Schwartz inequality, it's easy to prove that $\|\Db_1^{-1}(\Ab_1-\Ab_2\Ab_4^{-1}\Ab_3)\Db_1^{-1}\|=O(p^{-1})$,  then the first part of this lemma holds.

For the second part, denote the eigenvalue decomposition of $\bSigma_y$ as $\bSigma_y=\bGamma_1\bLambda_1\bGamma_1^\top+\bGamma_2\bLambda_2\bGamma_2^\top$, where $\bGamma_1$ is composed of the leading $m$ eigenvectors. By Weyl's theorem, it's easy to prove that $\lambda_j(\bSigma_y)\asymp p$ for $j\le m$ and $\lambda_j(\bSigma_y)\le c$ for $j>m$. Hence, by definition,
\[
(\Lb\Lb^\top+\bSigma_{\epsilon})\bGamma_1=\bGamma_1\bLambda_1\Rightarrow \bGamma_1=\Lb\Lb^\top\bGamma_1\bLambda_1^{-1}+\bSigma_{\epsilon}\bGamma_1\bLambda_1^{-1}.
\]
Let $\Hb=\sqrt{p}\Lb^\top\bGamma_1\bLambda_1^{-1}$, then $\|\Hb\|_F=O(1)$ and
\[
\bGamma_1-\frac{1}{\sqrt{p}}\Lb\Hb=\bSigma_{\epsilon}\bGamma_1\bLambda_1^{-1}.
\]
Therefore,
\[
\frac{1}{\sqrt{p}}\Lb^\top\bGamma_1=\frac{1}{p}\Lb^\top\Lb\Hb+\frac{1}{\sqrt{p}}\Lb^\top\bSigma_{\epsilon}\bGamma_1\bLambda_1^{-1}={\bf O(1)}.
\]
Further,
\[
\frac{1}{\sqrt{p}}\Lb^\top\bGamma_2=-(\Hb^\top)^{-1}\bigg(\bGamma_1-\frac{1}{\sqrt{p}}\Lb\Hb-\bGamma_1\bigg)^\top\bGamma_2=-(\Hb^\top)^{-1}\bigg(\bGamma_1-\frac{1}{\sqrt{p}}\Lb\Hb\bigg)^\top\bGamma_2.
\]
Hence, by Cauchy-Schwartz inequality,
\[
\bigg\|\frac{1}{\sqrt{p}}\Lb^\top\bGamma_2\bigg\|_F^2\lesssim\bigg\|\bGamma_1-\frac{1}{\sqrt{p}}\Lb\Hb\bigg\|_F^2\le \|\bSigma_{\epsilon}\|^2\|\bGamma_1\|^2\|\bLambda_1^{-1}\|_F^2=O(p^{-2}).
\]
Now, we can calculate that
\[
\begin{split}
\|\Lb^\top\bSigma_y^{-2}\Lb\|_F^2=&\|\Lb^\top(\bGamma_1\bLambda_1^{-2}\bGamma_1^\top+\bGamma_2\bLambda_2^{-2}\bGamma_2^\top)\Lb\|_F^2\\
\lesssim&p^2\times \bigg\|\frac{1}{\sqrt{p}}\Lb^\top\bGamma_1\bigg\|_F^4\|\bLambda_1^{-2}\|^2+p^2\times\bigg\|\frac{1}{\sqrt{p}}\Lb^\top\bGamma_2\bigg\|_F^4\|\bLambda_2^{-2}\|^2=O(p^{-2}),
\end{split}
\]
which concludes the second part of the lemma.
	\end{proof}

\vspace{1em}
\noindent\textbf{Lemma S2.}
Under \textbf{Assumptions A, B, C}, as $\min\{n,p\}\rightarrow\infty$,  we have
		\[
		 \|\Mb_2\|_F^2=O_p\bigg(\frac{1}{np}+\frac{1}{p^3}\bigg),\quad \|\Mb_3\|_F^2=O_p\bigg(\frac{1}{np}+\frac{1}{p^3}\bigg),
		\]
		where $\Mb_2$ and $\Mb_3$ are defined in the proof of Theorem \ref{theorem:1}.
\begin{proof}

	Note that $\Mb_3$ is $\Mb_2$'s transpose so we only show the result with $\Mb_2$.  Assume $n$ is even and $\bar n=n/2$, otherwise we can delete the last observation. Given a permutation of $\{1,\ldots,n\}$, denoted as $\sigma$,  let $\bbf_t^\sigma$, $\bepsilon_t^\sigma$ and $\by_t^\sigma$ be the rearranged factors, errors and observations, further define
	\[
	\Mb_2^\sigma=\frac{1}{\bar n}\sum_{s=1}^{\bar n}\frac{(\bepsilon_{2s-1}^\sigma-\bepsilon_{2s}^\sigma)(\bbf_{2s-1}^\sigma-\bbf_{2s}^\sigma)^\top}{\|\by_{2s-1}^\sigma-\by_{2s}^\sigma\|^2}.
	\]
	Denote $\mathcal{S}_n$ as the set containing all the permutations of $\{1,\ldots,n\}$, then it's not hard to prove that
	\[
	\sum_{\sigma\in \mathcal{S}_n}\frac{n}{2}\Mb_2^\sigma=n\times (n-2)!\times \frac{n(n-1)}{2}\Mb_2.
	\]
	That is,
	\[
	\Mb_2=\frac{1}{n!}\sum_{\sigma\in \mathcal{S}_n}\Mb_2^\sigma\quad \Rightarrow\quad \mathbb{E}\|\Mb_2\|_F\le \frac{1}{n!}\sum_{\sigma\in \mathcal{S}_n}\mathbb{E}\|\Mb_2^\sigma\|_F=\mathbb{E}\|\Mb_2^\sigma\|_F\le \sqrt{\mathbb{E}\|\Mb_2^{\sigma}\|_F^2}\text{ for any $\sigma$}.
	\]
	Now take $\sigma$ as given, i.e., $\sigma=\{1,\ldots,n\}$ which is the original order.  By the property of elliptical distribution, for any $s=1,\ldots,\bar n$,
	\[
\left(\begin{matrix}
\bbf_{2s-1}-\bbf_{2s}\\
\bepsilon_{2s-1}-\bepsilon_{2s}\\
\end{matrix}
\right)\overset{d}{=}\xi_1\left(\begin{matrix}
\Ib_m&{\bf 0}\\
{\bf 0}&\Ab\\
\end{matrix}
\right)\frac{\bg}{\|\bg\|},
	\]
	where $\xi_1$ is determined by $\xi$. $\bg\sim\mathcal{N}_{m+p}({\bf 0},\Ib)$ and $\bg$ is independent of $\xi_1$. $\Ab\Ab^\top=\bSigma_{\epsilon}$. Hence,
	\[
	 \Xb_s:=\frac{(\bepsilon_{2s-1}^\sigma-\bepsilon_{2s}^\sigma)(\bbf_{2s-1}^\sigma-\bbf_{2s}^\sigma)^\top}{\|\by_{2s-1}^\sigma-\by_{2s}^\sigma\|^2}\overset{d}{=}\frac{\Ab\bg_{2}\bg_{1}^\top}{\|\Lb\bg_{1}+\Ab\bg_{2}\|^2},
	\]
	where $\bg_1$ is composed of the first $m$ entries of $\bg$ and $\bg_2$ is composed of the left ones.  Note that $\Xb_s$ and $\Xb_t$ are independently and identically distributed when $s\ne t$,  so
	\begin{equation}\label{equs1}
	 \mathbb{E}\|\Mb_2^{\sigma}\|_F^2=\mathbb{E}\bigg\|\frac{1}{\bar n}\sum_{s=1}^{\bar n}\Xb_s\bigg\|_F^2=\frac{1}{\bar n}\mathbb{E}\|\Xb_1\|_F^2+\frac{\bar n(\bar n-1)}{\bar n^2}\|\mathbb{E}\Xb_1\|_F^2.
	\end{equation}
	
	We first focus on the matrix $\mathbb{E}\Xb_1$. Define
	\[
	\left(\begin{matrix}
	\bu_1\\
	\bu_2\\
	\end{matrix}\right)=\left(\begin{matrix}
	\Ib_m&-\Lb^\top\bSigma_y^{-1}\\
	{\bf 0}&\Ib_p\\
	\end{matrix}\right)\left(\begin{matrix}
	\Ib_m&{\bf 0}\\
	\Lb&\Ab\\
	\end{matrix}\right)\left(\begin{matrix}
	\bg_1\\
	\bg_2\\
	\end{matrix}\right)
	\sim\mathcal{N}\bigg({\bf 0},\left(\begin{matrix}
\bSigma_{\bu_1}&{\bf 0}\\
	{\bf 0}&\bSigma_y\\
	\end{matrix}\right)\bigg),
	\]
	where $\bSigma_{\bu_1}=	 \Ib_m-\Lb^\top\bSigma_y^{-1}\Lb$, then $\bu_1$ and $\bu_2$ are independent and
	\[
	\begin{split}
	\left(\begin{matrix}
\bg_1\\
\bg_2\\
\end{matrix}\right)=&\left(\begin{matrix}
\Ib_m&{\bf 0}\\
-\Ab^{-1}\Lb&\Ab^{-1}\\
\end{matrix}\right)\left(\begin{matrix}
\Ib_m&\Lb^\top\bSigma_y^{-1}\\
{\bf 0}&\Ib_p\\
\end{matrix}\right)\left(\begin{matrix}
\bu_1\\
\bu_2\\
\end{matrix}\right)
= \left(\begin{matrix}
\Ib_m&\Lb^\top\bSigma_y^{-1}\\
-\Ab^{-1}\Lb&\Ab^\top\bSigma_y^{-1}\\
\end{matrix}\right)\left(\begin{matrix}
\bu_1\\
\bu_2\\
\end{matrix}\right).
	\end{split}
	\]
As a result,
	\begin{equation}\label{equs2}
	 \frac{\Ab\bg_{2}\bg_1^\top}{\|\Lb\bg_{1}+\Ab\bg_{2}\|^2}=\frac{(-\Lb\bu_1+\Ab\Ab^\top\bSigma_y^{-1}\bu_2)(\bu_1+\Lb^\top\bSigma_y^{-1}\bu_2)^\top}{\|\bu_2\|^2}.
	\end{equation}
	Because $\bu_1$ and $\bu_2$ are zero-mean independent Gaussian vectors, we have
	\[
	 \mathbb{E}\frac{\bu_2\bu_1^\top}{\|\bu_2\|^2}={\bf 0},\quad \mathbb{E}\|\bu_2\|^{-2}\le \frac{1}{\lambda_p(\bSigma_y)}\mathbb{E}\frac{1}{\chi_p^2}\asymp p^{-1},
	\]
where $\chi_p$ is a Chi-square random variable with degree $p$. Hence,
\begin{equation}\label{equs3}
	 \mathbb{E}\Xb_1=-\mathbb{E}\|\bu_2\|^{-2}\Lb\bSigma_{\bu_1}+\Ab\Ab^\top\bSigma_y^{-1}\mathbb{E}\frac{\bu_2\bu_2^\top}{\|\bu_2\|^2}\bSigma_y^{-1}\Lb.
\end{equation}
By \textbf{Lemma S1}, $\|\bSigma_{\bu_1}\|_F^2=O(p^{-2})$, then we have
\begin{equation}\label{equs4}
\bigg\|-\mathbb{E}\|\bu_2\|^{-2}\Lb\bSigma_{\bu_1}\bigg\|_F^2\le \|\Lb\|_F^2\|\bSigma_{\bu_1}\|_F^2\Big(\mathbb{E}\|\bu_2\|^{-2}\Big)^2=O(p^{-3}).
\end{equation}
For the second term, denote the spectral decomposition of $\bSigma_y$ as $\bGamma_y\bLambda_y\bGamma_y^\top$, then
\[
\bigg\|\bSigma_y^{-1}\mathbb{E}\frac{\bu_2\bu_2^\top}{\|\bu_2\|^2}\bSigma_y^{-1}\Lb\bigg\|_F^2=\bigg\|\bLambda_y^{-1}\mathbb{E}\frac{\bGamma_y^\top\bu_2\bu_2^\top\bGamma_y}{\|\bGamma_y^\top\bu_2\|^2}\bLambda_y^{-1}\bGamma_y^\top\Lb\bigg\|_F^2.
\]
Denote
\[
\tilde\bLambda:=\mathbb{E}\frac{\bGamma_y^\top\bu_2\bu_2^\top\bGamma_y}{\|\bGamma_y^\top\bu_2\|^2},
\]
then $\tilde\bLambda$ is diagonal with positive diagonal entries and
\[
\tilde\bLambda\le \bLambda_y^{1/2}\mathbb{E}\frac{\bLambda_y^{-1/2}\bGamma_y^\top\bu_2\bu_2^\top\bGamma_y\bLambda_y^{-1/2}}{\lambda_p(\bSigma_y)\times\|\bLambda_y^{-1/2}\bGamma_y^\top\bu_2\|^2}\bLambda_y^{1/2}=\frac{1}{p\lambda_p(\bSigma_y)}\bLambda_y.
\]
Consequently,
\begin{equation}\label{equs5}
\bigg\|\bSigma_y^{-1}\mathbb{E}\frac{\bu_2\bu_2^\top}{\|\bu_2\|^2}\bSigma_y^{-1}\Lb\bigg\|_F^2\le \bigg\|\frac{1}{p\lambda_p(\bSigma_y)}\bLambda_y^{-1}\bGamma_y^\top\Lb\bigg\|_F^2\lesssim p^{-2}\|\Lb^\top\bSigma_y^{-2}\Lb\|_F=O(p^{-3}). \text{ (by \textbf{Lemma S1})}
\end{equation}
Combing equations (\ref{equs3}), (\ref{equs4}) and (\ref{equs5}) we have $\|\mathbb{E}\Xb_1\|_F^2=O(p^{-3})$.

Now we move to the calculation of $\mathbb{E}\|\Xb_1\|_F^2$.  Note that
\[
\mathbb{E}\|\bu_1\|^2\lesssim p^{-1},\quad \mathbb{E}\|\bu_1\|^4\le 2^m\times 3\mathbb{E}\|\bu_1\|^2\lesssim p^{-1},\quad \mathbb{E}\|\bu_2\|^{-4}\le \frac{1}{\lambda_p^2(\bSigma_y)}\mathbb{E}\chi_p^{-2}\asymp p^{-2},
\]
then  by Cauchy-Schwartz inequality,
\[
\begin{split}
\mathbb{E}\|\Xb_1\|_F^2\lesssim& \mathbb{E}\frac{(\|\Lb\bu_1\|^2+\|\Ab\Ab^\top\bSigma_y^{-1}\bu_2\|^2)(\|\bu_1\|^2+\|\Lb^\top\bSigma_y^{-1}\bu_2\|^2)}{\|\bu_2\|^4}\\
\lesssim&\mathbb{E}\bigg(\frac{\|\Lb\|_F^2\|\bu_1\|^4}{\|\bu_2\|^4}+\frac{\|\bSigma_y^{-1}\|^2\|\bu_1\|^2+\|\Lb\|_F^2\|\Lb^\top\bSigma_y^{-1}\|^2\|\bu_1\|^2}{\|\bu_2\|^2}+\|\bSigma_y^{-1}\|^2\|\Lb^\top\bSigma_y^{-1}\|^2\bigg)\\
=&O(p^{-1}).
\end{split}
\]

As a result,
\[
\mathbb{E}\|\Mb_2^{\sigma}\|_F\lesssim\sqrt{(np)^{-1}+p^{-3}}\Rightarrow \|\Mb_2\|_F^2=O_p\bigg(\frac{1}{np}+\frac{1}{p^3}\bigg).
\]
which concludes the lemma.
\end{proof}

\vspace{1em}
\noindent\textbf{Lemma S3.}
Under \textbf{Assumptions A, B, C},  we have $\|\Mb_1\|_F^2=O_p(p^{-2})$ and $\|\hat\Hb\|_F^2=O_p(1)$.
\begin{proof}
	Similar to  the proof of equations (\ref{equs1}) and (\ref{equs2}) in \textbf{Lemma S2}, it's easy to verify that
\[
\mathbb{E}\|\Mb_1\|_F\le\sqrt{\frac{1}{\bar n}\mathbb{E}\|\Xb\|_F^2+\frac{\bar n(\bar n-1)}{\bar n^2}\|\mathbb{E}\Xb\|_F^2},\text{ where } \Xb\overset{d}{=}\frac{(\bu_1+\Lb^\top\bSigma_y^{-1}\bu_2)(\bu_1+\Lb^\top\bSigma_y^{-1}\bu_2)}{\|\bu_2\|^2},
\]
and  $\bu_1$ and $\bu_2$ are the same as in the proof of \textbf{Lemma S2}. Therefore,
\[
\|\mathbb{E}\Xb\|_F^2\lesssim \|\bSigma_{\bu_1}\|_F^2(\mathbb{E}\|\bu_2\|^{-2})^2+ \bigg\|\Lb^\top\bSigma_y^{-1}\mathbb{E}\frac{\bu_2\bu_2^\top}{\|\bu_2\|^2}\bSigma_y^{-1}\Lb\bigg\|_F^2=O(p^{-2}).
\]
On the other hand,
\[
\mathbb{E}\|\Xb\|_F^2\lesssim \mathbb{E}\|\bu_1\|^4\mathbb{E}\|\bu_2\|^{-4}+\mathbb{E}\|\bu_1\|^2\|\Lb^\top\bSigma_y^{-1}\|^2\mathbb{E}\|\bu_2\|^{-2}+\|\Lb^\top\bSigma_y^{-1}\|_F^4=O(p^{-2}).
\]
As a result, $\|\Mb_1\|_F^2=O_p(p^{-2})$ and by the definition of $\hat\Hb$ we have $\|\hat\Hb\|_F^2\le\|\Mb_1\|_F^2\|\Lb\|_F^2\|\hat\Lb\|_F^2\|\hat\bLambda^{-1}\|_F^2=O_p(1)$, where $\hat\bLambda$ is composed of the leading $m$ eigenvalues of $\hat\Kb_y$.
\end{proof}

\vspace{1em}
\noindent\textbf{Lemma S4.}
Under \textbf{Assumptions A, B, C},  we have
	\[
	 \frac{1}{p}\|\Mb_4\hat\Lb\|_F^2=O_p\bigg(\frac{1}{n}+\frac{1}{p^2}\bigg)+o_p(1)\times \frac{1}{p}\|\hat\Lb-\Lb\hat\Hb\|.
	\]
\begin{proof}
	By the decomposition that  $\hat\Lb=\hat\Lb-\Lb\hat\Hb+\Lb\hat\Hb$, we have
	\begin{equation}\label{equs6}
	 \frac{1}{p}\|\Mb_4\hat\Lb\|_F^2\lesssim  \frac{1}{p}\|\Mb_4\Lb\|_F^2\|\hat\Hb\|_F^2+\|\Mb_4\|_F^2\times \frac{1}{p}\|\hat\Lb-\Lb\hat\Hb\|_F^2.
	\end{equation}
	
	We start with the matrix $\Mb_4\Lb$. 	Similar to  the proof of equations (\ref{equs1}) and (\ref{equs2}) in \textbf{Lemma S2}, we have
	\[
	\mathbb{E}\|\Mb_4\Lb\|_F\le\sqrt{\frac{1}{\bar n}\mathbb{E}\|\Xb\|_F^2+\frac{\bar n(\bar n-1)}{\bar n^2}\|\mathbb{E}\Xb\|_F^2},\text{ where } \Xb\overset{d}{=}\frac{(-\Lb\bu_1+\Ab\Ab^\top\bSigma_y^{-1}\bu_2)(-\Lb\bu_1+\Ab\Ab^\top\bSigma_y^{-1}\bu_2)^\top\Lb}{\|\bu_2\|^2},
\]
and  $\bu_1$ and $\bu_2$ are the same as in the proof of  \textbf{Lemma S2}. Therefore,
\[
\|\mathbb{E}\Xb\|_F^2\lesssim \|\Lb\|_F^2\|\bSigma_{\bu_1}\|_F^2\|\Lb^\top\Lb\|_F^2(\mathbb{E}\|\bu_2\|^{-2})^2+ \bigg\|\bSigma_y^{-1}\mathbb{E}\frac{\bu_2\bu_2^\top}{\|\bu_2\|^2}\bSigma_y^{-1}\bigg\|^2\|\Lb\|_F^2=O(p^{-1}).
\]
On the other hand,
\[
\begin{split}
\mathbb{E}\|\Xb\|_F^2\lesssim& \|\Lb\|_F^2\mathbb{E}\|\bu_1\|^4\|\Lb^\top\Lb\|_F^2\mathbb{E}\|\bu_2\|^{-4}+\|\bSigma_y^{-1}\|^2\mathbb{E}\|\bu_1\|^2\|\Lb^\top\Lb\|_F^2\mathbb{E}\|\bu_2\|^{-2}\\
&+\|\Lb\|_F^2\mathbb{E}\|\bu_1\|^2\|\bSigma_y^{-1}\Ab\Ab^\top\Lb\|_F^2\mathbb{E}\|\bu_2\|^{-2}+\|\bSigma_y^{-1}\|^2\|\bSigma_y^{-1}\Ab\Ab^\top\Lb\|_F^2=O(p).
\end{split}
\]
As a result,
\[
\|\Mb_4\Lb\|_F^2=O_p\bigg(\frac{p}{n}+\frac{1}{p}\bigg).
\]

Based on equation (\ref{equs6}), it remains to show that $\|\Mb_4\|_F^2=o_p(1)$. Similarly to  the proof of equations (\ref{equs1}) and (\ref{equs2}) again, we have
	\[
\mathbb{E}\|\Mb_4\|_F\le\sqrt{\frac{1}{\bar n}\mathbb{E}\|\Zb\|_F^2+\frac{\bar n(\bar n-1)}{\bar n^2}\|\mathbb{E}\Zb\|_F^2},\text{ where } \Zb\overset{d}{=}\frac{(-\Lb\bu_1+\Ab\Ab^\top\bSigma_y^{-1}\bu_2)(\Lb\bu_1+\Ab\Ab^\top\bSigma_y^{-1}\bu_2)^\top}{\|\bu_2\|^2}.
\]
On one hand,
\[
\|\mathbb{E}\Zb\|_F^2\lesssim \|\Lb\|_F^2\|\bSigma_{\bu_1}\|_F^2\|\Lb\|_F^2(\mathbb{E}\|\bu_2\|^{-2})^2+\bigg\|\bSigma_y^{-1}\mathbb{E}\frac{\bu_2\bu_2^\top}{\|\bu_2\|^2}\bSigma_y^{-1}\bigg\|_F^2=O(p^{-1}).
\]
On the other hand,
\[
\mathbb{E}\|\Zb\|_F^2\lesssim \|\Lb\|_F^4\mathbb{E}\|\bu_1\|^4\mathbb{E}\|\bu_2\|^{-4}+\|\bSigma_y^{-1}\|^2\|\Lb\|_F^2\mathbb{E}\|\bu_1\|^2+(\|\bSigma_y^{-1}\|^2)^2=O(1).
\]
Hence, $\|\Mb_4\|_F^2=O_p(n^{-1}+p^{-1})=o_p(1)$ and the lemma holds.
 \end{proof}

\vspace{1em}
\noindent\textbf{Lemma S5.}
Under \textbf{Assumptions A, B, C},  we have
	\[
	 \bigg\|\frac{1}{p}\hat\Lb^\top(\hat\Lb-\Lb\hat\Hb)\bigg\|_F^2=O_p\bigg(n^{-2}+p^{-2}\bigg).
	\]
\begin{proof}
	Note that
	\[
	 \frac{1}{p}\hat\Lb^\top(\hat\Lb-\Lb\hat\Hb)=\frac{1}{p}(\hat\Lb-\Lb\hat\Hb)^\top(\hat\Lb-\Lb\hat\Hb)+\frac{1}{p}\hat\Hb^\top\Lb^\top(\hat\Lb-\Lb\hat\Hb),
	\]
	while $p^{-1}\|\hat\Lb-\Lb\hat\Hb\|_F^2=O_p(n^{-1}+p^{-2})$ and $\|\hat\Hb\|_F^2=O_p(1)$. Hence it suffices to show that
	\[
	 \bigg\|\frac{1}{p}\Lb^\top(\hat\Lb-\Lb\hat\Hb)\bigg\|_F^2=O_p(n^{-2}+p^{-2}).
	\]
	By equation (\ref{equa1}),
	\begin{equation}\label{equs7}
	 \frac{1}{p}\Lb^\top(\hat\Lb-\Lb\hat\Hb)=\bigg(\frac{1}{p}\Lb^\top\Mb_2\Lb^\top\hat\Lb+\frac{1}{p}\Lb^\top\Lb\Mb_3\hat\Lb+\frac{1}{p}\Lb^\top\Mb_4\hat\Lb\bigg)\hat\bLambda^{-1}.
	\end{equation}
	
	We will calculate the three error terms separately. Firstly, similarly to  the proof of equations (\ref{equs1}) and (\ref{equs2}) in \textbf{Lemma S2}, we have
\[
\mathbb{E}\|\Lb^\top\Mb_2\|_F\lesssim\sqrt{n^{-1}\mathbb{E}\|\Xb\|_F^2+\|\mathbb{E}\Xb\|_F^2},\text{ where } \Xb\overset{d}{=}\frac{\Lb^\top(-\Lb\bu_1+\Ab\Ab^\top\bSigma_y^{-1}\bu_2)(\bu_1+\Lb^\top\bSigma_y^{-1}\bu_2)^\top}{\|\bu_2\|^2}.
\]
Then, it's not hard to show that
\[
\begin{split}
\|\mathbb{E}\Xb\|_F^2\lesssim&\|\Lb^\top\Lb\|_F^2\|\bSigma_{\bu_1}\|_F^2(\mathbb{E}\|\bu_2\|^{-2})^2+\|\Lb\|_F^2\bigg\|\bSigma_y^{-1}\mathbb{E}\frac{\bu_2\bu_2^\top}{\|\bu_2\|^2}\bSigma_y^{-1}\Lb\bigg\|_F^2=O(p^{-2}),\\
\mathbb{E}\|\Xb\|_F^2\lesssim&\|\Lb^\top\Lb\|_F^2\mathbb{E}\|\bu_1\|^4\mathbb{E}\|\bu_2\|^{-4}+\|\Lb\|_F^2\mathbb{E}\|\bu_1\|^2\mathbb{E}\|\bu_2\|^{-2}+\|\Lb^\top\Lb\|_F^2\|\Lb^\top\bSigma_y^{-1}\|_F^2\mathbb{E}\|\bu_1\|^2\mathbb{E}\|\bu_2\|^{-2}\\
&+\|\Lb\|_F^2\text{tr}\bigg(\Lb^\top\bSigma_y^{-1}\mathbb{E}\frac{\bu_2\bu_2^\top}{\|\bu_2\|^2}\bSigma_y^{-1}\Lb\bigg)=O(p^{-1}).
\end{split}
\]
As a result,
\begin{equation}\label{equs8}
\|\Lb^\top\Mb_2\|_F^2=O_p\bigg(\frac{1}{np}+\frac{1}{p^2}\bigg).
\end{equation}
Similarly, $\|\Mb_3\Lb\|_F^2=O_p(n^{-1}p^{-1}+p^{-2})$ and
\begin{equation}\label{equs9}
\|\Mb_3\hat\Lb\|_F^2\lesssim \|\Mb_3\Lb\|_F^2+\|\Mb_3\|_F^2\|\hat\Lb-\Lb\hat\Hb\|_F^2=O_p\bigg(\frac{1}{n^2}+\frac{1}{np}+\frac{1}{p^2}\bigg).
\end{equation}
For the last term, we can verify that
\[
\mathbb{E}\|\Lb^\top\Mb_4\Lb\|_F\lesssim\sqrt{n^{-1}\mathbb{E}\|\Zb\|_F^2+\|\mathbb{E}\Zb\|_F^2},\text{ where } \Zb\overset{d}{=}\frac{\Lb^\top(\Lb\bu_1+\bSigma_{\epsilon}\bSigma_y^{-1}\bu_2)(\Lb\bu_1+\bSigma_{\epsilon}\bSigma_y^{-1}\bu_2)^\top\Lb^\top}{\|\bu_2\|^2}.
\]
Some simple calculations lead to
\[
\begin{split}
\|\mathbb{E}\Zb\|_F^2\lesssim&\|\Lb^\top\Lb\|_F^4\|\bSigma_{\bu_1}\|_F^2\Big(\mathbb{E}\|\bu_2\|^{-2}\Big)^2+\|\Lb\|_F^4\bigg\|\bSigma_y^{-1}\mathbb{E}\frac{\bu_2\bu_2^\top}{\|\bu_2\|^2}\bSigma_y^{-1}\bigg\|^2=O(1),\\
\mathbb{E}\|\Zb\|_F^2\lesssim&\|\Lb^\top\Lb\|_F^4\mathbb{E}\|\bu_1\|^4\mathbb{E}\|\bu_2\|^{-4}+\|\Lb\|_F^2\|\Lb^\top\Lb\|_F^2\mathbb{E}\|\bu_1\|^2\mathbb{E}\|\bu_2\|^{-2}\\
&+\|\Lb\|_F^2\text{tr}\bigg(\Lb^\top\bSigma_{\epsilon}\bSigma_y^{-1}\mathbb{E}\frac{\bu_2\bu_2^\top}{\|\bu_2\|^2}\bSigma_y^{-1}\bSigma_{\epsilon}\Lb\bigg)=O(p).
\end{split}
\]
Hence, $\|\Lb^\top\Mb_4\Lb\|_F^2=O_p(1+p/n)$ and
\begin{equation}\label{equs10}
\bigg\|\frac{1}{p}\Lb^\top\Mb_4\hat\Lb\bigg\|_F^2\lesssim \frac{1}{p^2}\bigg(\|\Lb^\top\Mb_4\Lb\|_F^2+\|\Lb^\top\Mb_4\|_F^2\|\hat\Lb-\Lb\hat\Hb\|_F^2\bigg)=O_p\bigg(\frac{1}{n^2}+\frac{1}{np}+\frac{1}{p^2}\bigg).
\end{equation}
Combine equations (\ref{equs7}), (\ref{equs8}), (\ref{equs9}) and (\ref{equs10}) we have
\[
	 \bigg\|\frac{1}{p}\Lb^\top(\hat\Lb-\Lb\hat\Hb)\bigg\|_F^2=O_p\bigg(\frac{1}{n^2}+\frac{1}{np}+\frac{1}{p^2}\bigg)=O_p(n^{-2}+p^{-2}),
\]
which concludes the lemma.
\end{proof}

\vspace{1em}
\noindent\textbf{Lemma S6.}
Under \textbf{Assumptions A, B, C}, we have for any $t\le n$,
\[
\bigg\|\frac{1}{p}(\hat\Lb-\Lb\hat\Hb)^\top\bepsilon_t\bigg\|_F^2=O_p(n^{-2}+p^{-2}).
\]
\begin{proof}
	Since $\bepsilon_1,\ldots,\bepsilon_n$ are i.i.d.,  we assume $t=1$. Apply equation (\ref{equa1}) again, we have
	\begin{equation}\label{equs11}
	 \frac{1}{p}(\hat\Lb-\Lb\hat\Hb)^\top\bepsilon_1=\frac{1}{p}\bigg(\hat\Lb^\top\Lb\Mb_2^\top\bepsilon_1+\hat\Lb^\top\Mb_3^\top\Lb^\top\bepsilon_1+\hat\Lb^\top\Mb_4^\top\bepsilon_1\bigg)\hat\bLambda^{-1}.
	\end{equation}
	
	We calculate the three error terms separately. Firstly by the definition of $\Mb_2$, we have
	\[
	 \Mb_2^\top\bepsilon_1=\frac{2}{n(n-1)}\sum_{s=2}^{n}\frac{(\bbf_1-\bbf_s)(\bepsilon_1-\bepsilon_s)^\top}{\|\by_1-\by_s\|^2}\bepsilon_1+\frac{2}{n(n-1)}\sum_{2\le s<s^\prime\le n}\frac{(\bbf_s-\bbf_{s^\prime})(\bepsilon_s-\bepsilon_{s^\prime})^\top}{\|\by_s-\by_{s^\prime}\|^2}\bepsilon_1.
	\]
	For the first term, by Cauchy-Schwartz inequality we have
	\[
	 \bigg\|\frac{2}{n(n-1)}\sum_{s=2}^{n}\frac{(\bbf_1-\bbf_s)(\bepsilon_1-\bepsilon_s)^\top}{\|\by_1-\by_s\|^2}\bepsilon_1\bigg\|_F^2\lesssim n^{-3}\sum_{s=2}^n\bigg\|\frac{(\bbf_1-\bbf_s)(\bepsilon_1-\bepsilon_s)^\top}{\|\by_1-\by_s\|^2}\bigg\|_F^2\|\bepsilon_1\|^2.
	\]
	By the proof of \textbf{Lemma S2}, for any $s=2,\ldots,n$,
	\[
	 \mathbb{E}\bigg\|\frac{(\bbf_1-\bbf_s)(\bepsilon_1-\bepsilon_s)^\top}{\|\by_1-\by_s\|^2}\bigg\|_F^2=O(p^{-1}).
	\]
	while by Lemma \ref{lemma:2} we have for any $t\le T$,
	\[
	\|\bepsilon_t\|^2=\bigg(\frac{\zeta_t}{\sqrt{p}}\bigg)^2\frac{\|\sqrt{p}({\bf 0},\Ab)\bg_t\|^2}{\|\bg_t\|^2}=O_p(\|\Ab\|_F^2)=O_p(p).
	\]
     Hence,
	\[
	 \bigg\|\frac{2}{n(n-1)}\sum_{s=2}^{n}\frac{(\bbf_1-\bbf_s)(\bepsilon_1-\bepsilon_s)^\top}{\|\by_1-\by_s\|^2}\bepsilon_1\bigg\|_F^2=O_p(n^{-2}).
	\]
	For the second term,
	\[
	\frac{2}{n(n-1)}\sum_{2\le s<s^\prime\le n}\frac{(\bbf_s-\bbf_{s^\prime})(\bepsilon_s-\bepsilon_{s^\prime})^\top}{\|\by_s-\by_{s^\prime}\|^2}\bepsilon_1=\frac{\zeta_1}{\sqrt{p}}\times\frac{2}{n(n-1)}\sum_{2\le s<s^\prime\le n}\frac{(\bbf_s-\bbf_{s^\prime})(\bepsilon_s-\bepsilon_{s^\prime})^\top}{\|\by_s-\by_{s^\prime}\|^2}\frac{\sqrt{p}({\bf 0}, \Ab)\bg}{\|\bg\|^2},
	\]
	where $\bg\sim\mathcal{N}(0,\Ib_{m+p})$ and is independent of $\{\bbf_s,\bepsilon_s\}_{2\le s\le n}$. By Lemma \ref{lemma:2},
	\[
	\mathbb{E}\bigg\|\frac{2}{n(n-1)}\sum_{2\le s<s^\prime\le n}\frac{(\bbf_s-\bbf_{s^\prime})(\bepsilon_s-\bepsilon_{s^\prime})^\top}{\|\by_s-\by_{s^\prime}\|^2}\frac{\sqrt{p}({\bf 0}, \Ab)\bg}{\|\bg\|^2}\bigg\|^2\le	 \mathbb{E}\bigg\|\frac{2}{n(n-1)}\sum_{2\le s<s^\prime\le n}\frac{(\bbf_s-\bbf_{s^\prime})(\bepsilon_s-\bepsilon_{s^\prime})^\top}{\|\by_s-\by_{s^\prime}\|^2}\bigg\|_F^2\|\Ab\|.
	\]
	Removing the first observations will not change the large-sample property of $\Mb_2$, then
	\[
	\bigg\|\frac{2}{n(n-1)}\sum_{2\le s<s^\prime\le n}\frac{(\bbf_s-\bbf_{s^\prime})(\bepsilon_s-\bepsilon_{s^\prime})^\top}{\|\by_s-\by_{s^\prime}\|^2}\bepsilon_1\bigg\|_F^2=O_p\bigg(\|\Mb_2\|_F^2\bigg)=O_p\bigg(\frac{1}{np}+\frac{1}{p^3}\bigg).
	\]
	As a result,
	\begin{equation}\label{equs12}
	\|\Mb_2^\top\bepsilon_1\|^2\le O_p\bigg(\frac{1}{n^2}+\frac{1}{np}+\frac{1}{p^3}\bigg)
	\end{equation}
	
	Secondly,  it's easy that
	\begin{equation}\label{equs13}
	 \bigg\|\frac{1}{p}\hat\Lb^\top\Mb_3^\top\Lb^\top\bepsilon_1\bigg\|^2\le p^{-2}\|\hat\Lb\|_F^2\|\Mb_3\|_F^2\|\Lb^\top\bepsilon_1\|^2=O_p\bigg(\frac{1}{np}+\frac{1}{p^3}\bigg).
	\end{equation}
	
	Thirdly,
	\[
	 \frac{1}{p}\hat\Lb^\top\Mb_4^\top\bepsilon_1=\frac{1}{p}(\hat\Lb-\Lb\hat\Hb)^\top\Mb_4^\top\bepsilon_1+\frac{1}{p}\hat\Hb^\top\Lb^\top\Mb_4^\top\bepsilon_1.
	\]
	For the first term, note that in the proof of \textbf{Lemma S4} we get $\|\Mb_4\|_F^2=O_p(n^{-1}+p^{-1})$, then
	\[
	 \bigg\|\frac{1}{p}(\hat\Lb-\Lb\hat\Hb)^\top\Mb_4^\top\bepsilon_1\bigg\|_F^2\le \frac{1}{p}\|\hat\Lb-\Lb\hat\Hb\|_F^2\|\Mb_4\|^2\frac{1}{p}\|\bepsilon_1\|^2\le O_p\bigg(\frac{1}{n^2}+\frac{1}{p^2}\bigg).
	\]
For the second term, by the definition of $\Mb_4$ we have
\[
\Lb^\top\Mb_4^\top\bepsilon_1=\frac{2}{n(n-1)}\sum_{s=2}^{n}\frac{\Lb^\top(\bepsilon_1-\bepsilon_s)(\bepsilon_1-\bepsilon_s)^\top}{\|\by_1-\by_s\|^2}\bepsilon_1+\frac{2}{n(n-1)}\sum_{2\le s<s^\prime\le n}\frac{\Lb^\top(\bepsilon_s-\bepsilon_{s^\prime})(\bepsilon_s-\bepsilon_{s^\prime})^\top}{\|\by_s-\by_{s^\prime}\|^2}\bepsilon_1.
\]
By a similar technique as how we control $\|\Mb_2^\top\bepsilon_1\|^2$, it's not hard to show that
\[
\bigg\|\frac{2}{n(n-1)}\sum_{s=2}^{n}\frac{\Lb^\top(\bepsilon_1-\bepsilon_s)(\bepsilon_1-\bepsilon_s)^\top}{\|\by_1-\by_s\|^2}\bepsilon_1\bigg\|^2\le O_p\bigg(\frac{p^2}{n^2}\bigg),
\]
while
\[
\bigg\|\frac{2}{n(n-1)}\sum_{2\le s<s^\prime\le n}\frac{\Lb^\top(\bepsilon_s-\bepsilon_{s^\prime})(\bepsilon_s-\bepsilon_{s^\prime})^\top}{\|\by_s-\by_{s^\prime}\|^2}\bepsilon_1\bigg\|^2\le O_p\bigg(\|\Lb^\top\Mb_4\|_F^2\bigg)=O_p\bigg(\frac{p}{n}+\frac{1}{p}\bigg).
\]
Therefore,
\begin{equation}\label{equs14}
\bigg\|\frac{1}{p}\hat\Lb^\top\Mb_4^\top\bepsilon_1\bigg\|^2=O_p\bigg(\frac{1}{n^2}+\frac{1}{p^2}+\frac{1}{np}\bigg)\le O_p\bigg(\frac{1}{n^2}+\frac{1}{p^2}\bigg).
\end{equation}
Combine equations (\ref{equs11}), (\ref{equs12}), (\ref{equs13}) and (\ref{equs14}) then the lemma holds.
\end{proof}

\end{document}